%TC:macro \cite [option:text,text]
%TC:macro \citep [option:text,text]
%TC:macro \citet [option:text,text]
%TC:envir table 0 1
%TC:envir table* 0 1
%TC:envir tabular [ignore] word
%TC:envir displaymath 0 word
%TC:envir math 0 word
%TC:envir comment 0 0

\documentclass[10pt,conference]{IEEEtran}

\usepackage{hyperref}
\AtBeginDocument{%
  }

\usepackage{graphicx} % Required for inserting images

\AtBeginDocument{%
  }

\usepackage{xspace}
\usepackage{color}
\usepackage{listings}
\usepackage{subcaption}
\usepackage{amsmath}
\usepackage{amsthm}
\usepackage{amssymb}
\newtheorem{lemma}{Lemma}[section]

\definecolor{listinggray}{gray}{0.9}
\definecolor{lbcolor}{rgb}{0.9,0.9,0.9}
\lstset{
        mathescape,
        % backgroundcolor=\color{lbcolor},
        % rulecolor=,
        % aboveskip={1.5\baselineskip},
        % extendedchars=true,
        tabsize=2,
        language=C,
        basicstyle=\ttfamily\scriptsize,
        upquote=true,
        columns=fixed,
        showstringspaces=false,
        breaklines=true,
        prebreak = \raisebox{0ex}[0ex][0ex]{\ensuremath{\hookleftarrow}},
        % frame=single,
        showtabs=false,
        showspaces=false,
        showstringspaces=false,
        identifierstyle=\ttfamily,
        keywordstyle=\color[rgb]{0,0,1},
        commentstyle=\color[rgb]{0.133,0.545,0.133},
        stringstyle=\color[rgb]{0.627,0.126,0.941},
        numbers=left,
        numberstyle=\tiny,
        numbersep=5pt,
        escapeinside={/*@}{@*/}
}
\newtheorem{theorem}{Theorem}

\newcommand{\weiyu}[1]{{\color{purple}\bf{#1}}}
\newcommand{\mycomment}[1]{}

\newcommand{\Tool}{\textsc{PMRobust}\xspace}
\newcommand{\tool}{\textsc{PMRobust}\xspace}
\newcommand{\captured}{\textit{captured}\xspace}
\newcommand{\escaped}{\textit{escaped}\xspace}
\newcommand{\clean}{\textit{clean}\xspace}
\newcommand{\clwb}{\textit{clwb}\xspace}
\newcommand{\dirty}{\textit{dirty}\xspace}
\newcommand{\ie}{\hbox{\emph{i.e.},}\xspace}
\newcommand{\eg}{\hbox{\emph{e.g.},}\xspace}

\newcommand{\code}[1]{{\footnotesize\texttt{#1}}}
\newcommand{\pstate}[0]{\ensuremath{p}\xspace}

\newcommand{\Pstate}[0]{\ensuremath{\mathcal{P}}\xspace}

\newcommand{\PMap}[1]{\ensuremath{\mathcal{G}_{\Pstate_{#1}}}\xspace}
\newcommand{\aloc}[0]{\ensuremath{l}\xspace}
\newcommand{\ALoc}[0]{\ensuremath{\mathcal{L}}\xspace}

\newcommand{\estate}[0]{\ensuremath{e}\xspace}
\newcommand{\Estate}[0]{\ensuremath{\mathcal{E}}\xspace}
\newcommand{\EMap}[0]{\ensuremath{\mathcal{G}_{\Estate}}\xspace}

\newcommand{\Astate}[0]{\ensuremath{\mathcal{A}}\xspace}
\newcommand{\AliasMap}[0]{\ensuremath{\mathcal{G}_{\Astate}}\xspace}

\newcommand{\aref}[0]{\ensuremath{r}}
\newcommand{\Refs}[0]{\ensuremath{R}}
\newcommand{\mloc}[0]{\ensuremath{m}}
\newcommand{\Mlocs}[0]{\ensuremath{M}}
\newcommand{\tuple}[1]{\ensuremath \langle #1 \rangle}

\newcounter{bugnumbers}

\newcommand{\overhead}[0]{0.26\%\xspace}

\hypersetup{final}
\begin{document}

\title{Automated Insertion of Flushes and Fences for Persistency}

\author{
    \IEEEauthorblockN{Yutong Guo}
    \IEEEauthorblockA{University of California, Irvine
    \\yutong4@uci.edu}\and
    \IEEEauthorblockN{Weiyu Luo}
    \IEEEauthorblockA{University of California, Irvine
    \\weiyu7@uci.edu}\and
    \IEEEauthorblockN{Brian Demsky}
    \IEEEauthorblockA{University of California, Irvine
    \\bdemsky@uci.edu}\and
}

\maketitle
\begin{abstract}
CXL shared memory and persistent memory allow the contents of memory
to persist beyond crashes.  Stores to persistent or CXL memory are typically
not immediately made persistent; developers must manually flush the
corresponding cache lines to force the data to be written to the
underlying storage. Correctly using flush and fence operations is known to 
be challenging.  While state-of-the-art
tools can find missing flush instructions, they often require
bug-revealing test cases.  No existing tools can ensure the absence of
missing flush bugs.

In this paper, we present \tool, a compiler that automatically inserts
flush and fence operations to ensure that code using persistent memory
is free from missing flush and fence bugs. \Tool employs a novel
static analysis with optimizations that target newly allocated objects.
We have evaluated \tool on persistent memory libraries and
several persistent memory data structures and measured a geometric
mean overhead of \overhead relative to the original benchmarks with
hand-placed flush and fence operations.

\end{abstract}
\section{Introduction\label{sec:intro}}

In several different contexts, hardware designers have developed
systems in which the contents of memory can survive system crashes.
Compute Express Link (CXL) is a new open standard that enables cache
coherent shared memory across a network.  CXL also supports 
memory interface to persistent storage~\cite{samsungcxlssd}.
Finally, the contents of battery-backed NVDIMM's can survive power
outages.

In all of these contexts, the in-memory data manipulated by a
computation can survive machine crashes, and it can be desirable to
ensure that key data structures are crash consistent so that they can
be safely accessed after crashes.  Achieving crash consistency is
complicated by the volatility of CPU caches, necessitating explicit
flush and fence instructions to persist data.~\footnote{CXL can 
optionally use energy storage to implement flush on failure, though
this requires support by both the device and
system components. This is not a panacea; many failure modes
would still result in lost cache lines.}  Software developers must use special flush and fence instructions to force data to be
written back to the underlying memory.

There is a body of work~\cite{psan,yashme,pmtest} on finding
bugs in persistent memory programs that utilize flushes and fences.  These tools range from model checkers~\cite{yat,jaaru}
to various dynamic bug finding
tools~\cite{xfdetector,pmemcheck,agamotto}.  The prevalence of such bugs,
with 183 new bugs reported by various tools~\cite{jaaru, psan,
  witcher-sosp21, pmdebugger-asplos21, xfdetector, pmtest, agamotto}
in a small expert-written program set, underscores the difficulty of
manual management and motivates automating flush/fence insertion via
compilers.

We developed a new tool that automatically generates the necessary
flush and fence operations.  Strict
persistency~\cite{memorypelley2014} ensures that the ``persistency
memory order is identical to volatile memory order''.  The observation
here is that bugs caused by missing flush and fence operations can be
eliminated by making stores become persistent in the same order that
they become visible to other threads.

Enforcing strict
persistency is expensive as it requires threads to stall frequently to wait for previous stores to be persisted. Fortunately, we can further loosen the requirements for strict persistency.  If it
is impossible for the program to observe that stores were persisted in
a different order, we can permit those stores to be reordered, and partly recover from the performance penalties of strict persistency.
Robustness leverages this additional degree of freedom and ensures
that any execution of a program under a weak persistency model is
equivalent to some execution of the program under the strict
persistency model~\cite{psan}.  This suffices to ensure the correct
usage of flush and fence operations as additional flush and fence
operations will not alter the set of possible post-crash program
executions. As missing flushes and fences are a major source of bugs in persistent memory program, a tool that eliminates these concerns greatly simplifies persistent memory programming. 

Robustness to persistency models has been explored before in the context of bug finding.  PSan uses a dynamic analysis combined with random execution or model checking to check robustness.  PSan
suffers from the same limitations as all dynamic analysis---it
requires test cases and may miss bugs that are not revealed by the
test cases.  Thus, the analysis used by PSan cannot be the foundation
of a compiler that automatically inserts flush and fence operations.

\textbf{\tool is
  applicable to CXL shared memory and all non-volatile memory 
  types.  We collectively refer to all of these memory types as
  persistent memory (PM) throughout the paper for convenience.}

This paper makes the following contributions:
\begin{itemize}
\item {\bf Static Analysis:}  It presents the first static analysis that can efficiently analyze full programs for missing flushes or fences using robustness as a correctness criteria.

\item {\bf PM Analysis:} 
It presents the first static analysis that can determine which operations will only modify volatile or local memory, reducing analysis and annotation overhead.   

\item {\bf Automated Flush Insertion:} It presents the first static tool that can both repair missing flush and fence bugs as well as insert any missing fences, freeing the developer from this task.

\item {\bf Eliminates Missing Flush Bugs:} It presents the first tool that ensures the absence of missing flush bugs.

\item {\bf Evaluation:} It evaluates \tool on a wide range benchmarks, incurring minimal runtime overhead.
\end{itemize}

We have made \tool's source code, benchmarks, and scripts for reproducing evaluation results available at: \url{https://github.com/uci-plrg/PMRobust-docker}
\section{Underlying Hardware and Failure Model}
Persistency semantics are important for a range of new memory
hardware. A well-studied example is Px86~\cite{intelx86}, the semantics of persistent memory for x86 processors, which adds a global \emph{persistent buffer} that holds stores after they exited the thread-local store buffers. Stores in the persistent buffer are written back to memory in arbitrary order at cache line granularity, and those not written back are lost upon a crash. To avoid inconsistent state caused by crashes under persistency semantics, fence and flushes instructions are often needed to impose persistency ordering between stores. 

While the problem has received much attention for
persistent memory, the same problem exists for CXL shared
 memory.  Figure~\ref{fig:memorysystem} presents a
graphical overview of CXL shared disaggregated memory.  In this
setting, processing nodes and memory nodes are connected via CXL
networking.  Memory can be coherently shared between multiple hosts
--- memory nodes contain directories that track which cache lines have
been requested by processing nodes.  CXL on x86 is intended to provide
TSO ordering guarantees across machines.

Failures of compute nodes pose a consistency problem.  Compute nodes
cache their writes to CXL shared memory and the contents of these
caches can be written back in an arbitrary order.  The CXL standard
provides for global persistent flush (GPF).  When GPF is supported, it uses
energy reserves to write the contents of the CPU cache back to the underlying storage upon a crash or system shutdown. However, GPF does not cover a wide
range of failure modes, \eg failed cables, failed CXL transceivers,
failed CPUs, failed motherboards, etc, and therefore does not provide a complete solution to the consistency problem.

If even a single machine that is updating a given CXL memory region
fails, the data in that machine's cache is corrupted for all machines
that access the CXL memory region.  In our email discussions with
Intel engineers, it is not yet decided how CXL will expose such
failures to the software layer.  One option is to report those
cache lines as poisoned and throw exceptions on access and another
is to allow software to access the copy of the data that
resides in the CXL memory node and let software manage flushing data
with flush and fence operations in a manner similar to 
PM systems.

\begin{figure}[!htbp]
\begin{center}
  \includegraphics[scale=0.26]{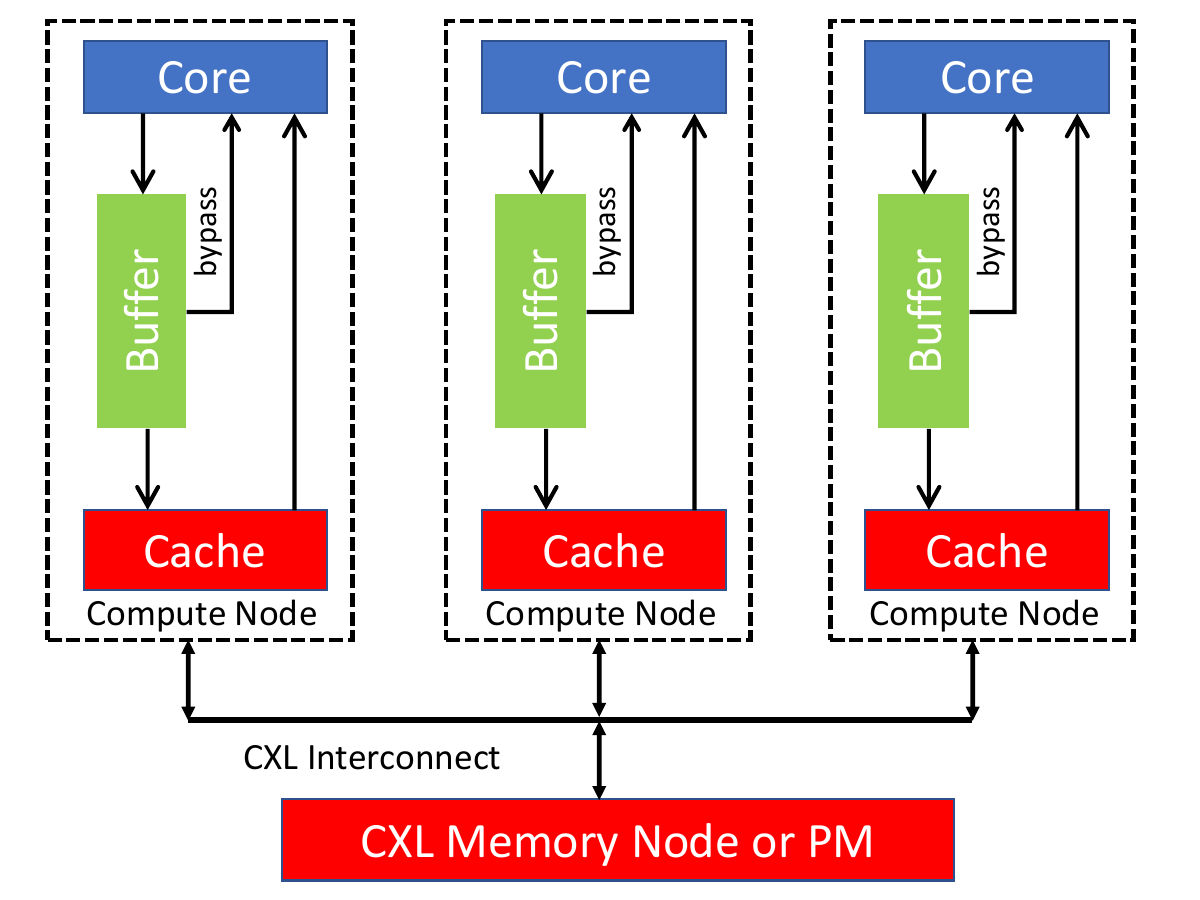}
\end{center}
\vspace{-.2cm}
  \caption{CXL Memory System\label{fig:memorysystem}}
\end{figure}

\textbf{PM can be viewed as a special case of the CXL shared memory
model with the differences that: (1) there is only one processing node and
 (2) after a crash, the node can return having lost the state of its
local cache.}

The x86 architecture provides the following instructions to force the cache
to write data back to persistent storage: (1) the  \code{clflush} and \code{clflushopt} instructions that
flush (and evict) a cache line and (2) the 
\code{clwb} instruction that writes back a cache line potentially without eviction.  Each of these instructions takes as input the address to flush.  The \code{clflush} instruction flushes a cache line immediately while the \code{clflushopt} and \code{clwb} instructions are not guaranteed to flush or write back a cache line until the thread executes a fence instruction.

\section{\Tool}

This section discusses the key ideas behind our approach to
inserting flush and fence operations.

\subsection{Robustness\label{sec:discipline}}

\begin{figure}[!h]
\begin{subfigure}{0.24\textwidth}
{\scriptsize
  \begin{lstlisting}[xleftmargin=1.65cm]
    x = 1;
    y = 1;
  \end{lstlisting}
\caption{\label{fig:precrash}Pre-crash execution}
  }
\end{subfigure}
\begin{subfigure}{0.24\textwidth}
  {\scriptsize
  \begin{lstlisting}[xleftmargin=1.6cm]
    r1 = x;
    r2 = y;
  \end{lstlisting}
  }
  \caption{\label{fig:postcrash}Post-crash execution}
  \end{subfigure}
\caption{\label{fig:robustness}Assume $\code{x = y = 0}$ initially.  If the post-crash execution observes $\code{r2 = 1}$, strict persistency requires that $\code{r1 = 1}$.}
\vspace{-.3cm}
\end{figure}

Figure~\ref{fig:robustness} presents an example that illustrates the
requirements of strict persistency.  Strict persistency requires that
the persistency order for stores respects the happens-before relation.
This means that the store $\code{x = 1}$ must be persisted before the store $\code{y = 1}$, forbidding an execution in which $\code{r1 = 0}$ and $\code{r2 = 1}$.

It may initially appear that robustness would require placing flushes
after every memory access to PM.  However, robustness
only requires that the persistency order for stores respects the
happens-before relation when an execution can potentially observe a
violation of strict persistency.  For example if post-crash executions 
only read from $\code{y}$, then the program is robust even if 
$\code{y = 1}$ is made persistent while $\code{x = 1}$ is not.  This
observation is most relevant for newly created persistent objects
that have not yet become reachable from persistent
data structures.  Thus, it suffices to delay flushing stores to newly created
persistent objects until right before they are inserted into
persistent data structures.

We next present a sufficient set of requirements on flush and fence operations
to ensure robustness.  Figure~\ref{fig:discipline} presents a finite
state machine that captures how to check robustness for the x86-TSO persistency model.  We refer to a
state in this finite state machine
as an \textit{escape persistency state}.  The
finite state machine captures the set of legal transitions for cache
lines through escape persistency states.
If there is an object that is both escaped and non-clean and the program writes to an escaped object, then this is a \textit{robustness violation}.
A key insight is differentiating
between (1) memory locations that are \textit{captured} by the local
thread and thus stores to the memory location would not be visible if
the program crashed and (2) memory locations that
have \textit{escaped} to become reachable from the roots of persistent data structures, and thus stores would be visible if the program
crashed. With this distinction, captured objects do need to flushed according to the volatile order, because they cannot be read from in a post-crash execution.

\begin{figure}[!htbp]
\begin{center}
\includegraphics[scale=0.31]{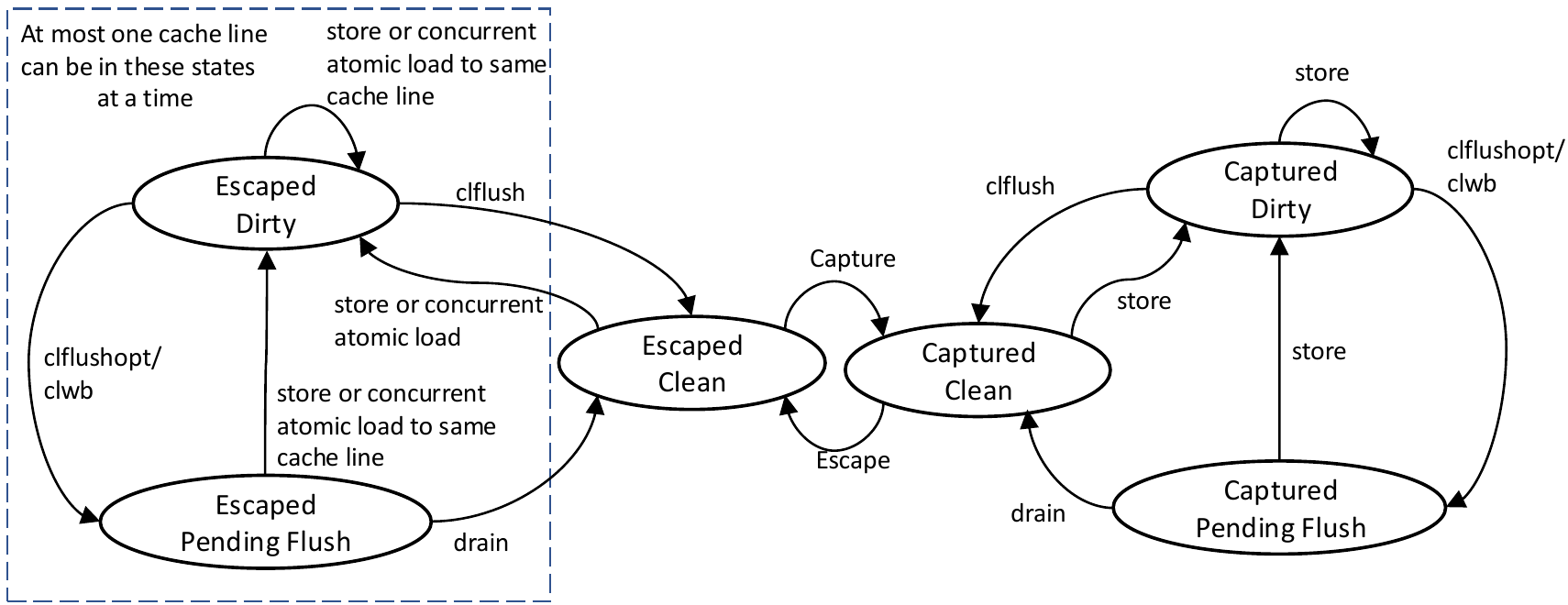}
\end{center}
\caption{Using Flush \& Drain Operations to Ensure Robustness\label{fig:discipline}}
\end{figure}

\noindent \textbf{Captured Objects:} Memory locations are captured
when there exists no path from the persistent data structure roots to
the memory location.  Stores to captured memory locations are not
visible after a crash, and thus it is safe to delay flushes
until immediately before the memory location escapes via insertion
into a persistent data structure.  This can have several
benefits---first, it becomes possible to use optimized flushes
like $\code{clflushopt}$ or $\code{clwb}$ on several
cache lines and amortize the cost of the fence operation across
multiple flushes.  Second, it is sufficient to handle
multiple stores to the same cache line with a single
flush.

\noindent \textbf{Escaped Objects:} Memory locations have escaped if
they are reachable from a persistent data structure.  Escaped memory
locations require a more expensive flush insertion approach.  If
consecutive stores happen to be to the same cache line, their
persistency order is enforced automatically due to cache coherence.
If they are to different cache lines, it is necessary to flush the
first store before performing the second store.

\begin{figure}[h]
\begin{subfigure}{0.24\textwidth}
{\scriptsize
\hspace{1.25cm} Thread 1:
\begin{lstlisting}[xleftmargin=1.7cm]
  x = 1;
  clflush(&x);
\end{lstlisting}  
\hspace{1.25cm} Thread 2:
\begin{lstlisting}[firstnumber=last, xleftmargin=1.7cm]
  r1 = x;
  y = r1;
\end{lstlisting}
\caption{\label{fig:rrprecrash}Pre-crash execution}
  }
\end{subfigure}
\begin{subfigure}{0.24\textwidth}
  {\scriptsize
  \begin{lstlisting}[xleftmargin=1.6cm]
    r2 = x;
    r3 = y;
  \end{lstlisting}
  }
  \caption{\label{fig:rrpostcrash}Post-crash execution}
  \end{subfigure}
\caption{\label{fig:racyreads}A non-robust program that is missing a flush on load. Assume that $\code{x = y = 0}$
initially and all accesses are atomic, it is possible that $\code{r2 = 0}$ and $\code{r3 = 1}$}
\end{figure}

Atomic loads require extra care to ensure robustness.  Consider the
example in Figure~\ref{fig:racyreads}.  If the pre-crash execution
crashes before the $\code{clflush}$ in Thread 1 completes,
but after the store to $\code{y}$ in Thread 2 has been made
persistent, it is possible for the post-crash execution to observe
$\code{r2 = 0}$ and $\code{r3 = 1}$, violating strict consistency.
Robustness in this case requires a $\code{clflush}$ to the
cache line of $\code{x}$ after Thread 2 reads from $\code{x}$.  This
example also has an implication for all stores to escaped memory
locations inside of critical sections---the store must be persisted
before the lock is released.

\subsection{Analyzing Robustness}

\begin{figure}[!h]
\begin{tabular}{c}
\begin{lstlisting}
struct Node {
  int data;
  struct Node * next;
};

struct Stack {
  struct Node * top;
};

void push(struct Stack *s, int val) { //s: <esc, clean>
  struct Node * head = s->top; /*@ \label{line:head}@*/
  struct Node * n = pmalloc(sizeof(struct Node));
  //n: <cap, <clean, clean>>
  n->data = val; /*@ \label{line:n-store1}@*/
  n->next = head; /*@ \label{line:n-store2}@*/
  //n: <cap, <dirty, dirty>>
  s->top = n; /*@ \label{line:commit-store}@*/
  //s: <esc, dirty>, n: <esc,<dirty, dirty>>
}
\end{lstlisting}
\end{tabular}
\caption{\label{fig:stack}A Persistent Stack. esc = escaped, cap = captured}
\end{figure}

We begin with an example to illustrate key concepts in our
approach to analyzing PM code.  Figure~\ref{fig:stack} presents a
single-threaded persistent stack.  The \code{push} method adds a new
value to the top of the stack.  It calls
\code{pmalloc} to allocate a new stack node, stores the value
\code{val} to the node, and updates the node's \code{next} field.
Then it flushes the new node and updates the top of the stack to
reference the new node.  Finally, it flushes the update to
the top of the stack.

In this example, the stack \code{s} and node \code{n} are persistent
variables and have one of the states in
Figure~\ref{fig:discipline}. We assume \code{s} and \code{n} are on
different cache lines.  The stack \code{s} is the root of the
persistent data structure and is escaped initially.  When the node
\code{n} is created, both of its fields have the state
$\tuple{\captured, \clean}$ initially.  Node \code{n} has the
state $\tuple{\captured, \dirty}$ for both fields after the stores at
lines~\ref{line:n-store1} and~\ref{line:n-store2}.  The commit store
at line~\ref{line:commit-store} makes \code{n} \escaped as \code{n} is
reachable from the persistent data structure root, and the state of
\code{s->top} transitions to $\tuple{\escaped, \dirty}$.  This is a
violation of robustness as both \code{n} and \code{s} are \escaped and
\dirty. Thus, we must insert a flush before
line~\ref{line:commit-store}.
 
\subsubsection{Basic Approach\label{sec:verification}}

We next discuss our approach to analyzing PM programs.  Our approach
builds on the finite state formulation for ensuring robustness from
Section~\ref{sec:discipline}.  Our analysis is structured as a
standard fixed-point dataflow analysis.  The basic idea is to use a
static analysis to compute at each program point a mapping from memory
locations to a set of potential escape persistency states.  The
transfer function implements the finite state machine from
Figure~\ref{fig:discipline}.  We apply this machine to each of the
potential escape persistency states for a given memory location to
generate a new output set of escape persistency states.

The analysis checks several correctness properties.  The first is that
the program does not take a forbidden transition that would violate
robustness such as having
multiple escaped and non-clean cache lines.  We compute summaries of
the effect of method calls by extending escape persistency states to
tracks the corresponding initial persistency states when the method
was first called.  When the analysis of the method is complete, the
static analysis has determined how the method changes each of the
possible escape persistency states.

Loads pose a challenge because they allow another thread
to observe a store before it is made persistent, and that thread may
later store a value that was derived from the value returned by the
load.  The later store can potentially be persisted before the initial
store, and thus a crash can leave the PM in an
inconsistent state.  For example, in Figure~\ref{fig:racyreads}, it is
possible for \code{y = r1} to be persisted before \code{x = 1}, and
then the post-crash execution would read \code{r2 = 0} and \code{r3 =
  1}.

This problem can be solved by inserting a flush 
immediately after every load, but this incurs an overhead.  We
consider two cases for loads:
\begin{enumerate}

\item {\bf Non-atomic Loads:} Here we assume programs are data race
  free, as they otherwise have undefined semantics in languages such
  as C/C++.  In the case of a non-atomic load, we require that
  non-atomic stores be persisted before any release operation such as
  an unlock operation.  Then there is no issue because store is
  persisted before its mutex is released and thus before it can be
  read.

\item {\bf Atomic Loads:} Atomic operations allow multiple threads to access memory
  without acquiring a lock.  As a result, we cannot assume the flush instruction after the corresponding store has completed before the atomic load.
  For atomic loads, we use
  FliT~\cite{flit} to ensure the corresponding store is flushed.  In
  the analysis, we address this issue by having atomic loads change
  the persistency state to dirty.  This forces the thread to flush the
  data before performing other visible stores.
  \end{enumerate}

\subsection{Detecting References to Persistent Memory}

\Tool uses a CFL-reachability-based alias analysis introduced by Zheng
and Rugina~\cite{cflalias} to distinguish PM locations from
non-persistent locations. First, the analysis requires a set of
user-configured PM allocators, and the pointers
returned by these allocators are identified as the initial set of PM
pointers. The aliases of known PM pointers are iteratively computed
and added to the set until a fixed point is reached. The
CFL-reachability-based formulation of the aliasing relation is precise
and enables a demand-driven algorithm, which finds all pointers that may
reach persistent memory. This helps \tool avoid analyzing
a large number of volatile memory pointers.

Our analysis is adapted from an existing
implementation from the LLVM-8 codebase~\cite{llvm-8} that computes
aliases between all pairs of pointers. We modified the analysis to
only explore aliases of PM pointers following the original
demand-driven formulation~\cite{cflalias}. We also added type-based
field-sensitivity to the analysis to track whether each offset of a
struct type is a PM pointer, and treat all objects of the same type
uniformly. This approach intentionally sacrifices some precision compared to
full field sensitivity because PM programs likely use
specialized data types for PM as we observe in the PM benchmarks.

\section{Intraprocedural Analysis\label{sec:intra}}

In this section, we first discuss the core intraprocedural analysis.
Later, in Section~\ref{sec:inter} we will extend this analysis to the
interprocedural context and to handle arrays.  The intraprocedural
part is implemented as a standard forward dataflow analysis.  The
algorithm maintains a program state at each instruction.

\subsection{Preliminaries\label{sec:prelim}}

The instructions that we analyze are atomic and non-atomic loads and stores, atomic RMWs,
assignments, and flushes and fences.  An object can occupy
multiple cache lines and thus an object reference $\aref \in \Refs$
can be used depending on the field to access one of several different
cache lines. Objects are by default not aligned to cache lines, and
thus the static analysis may not know whether two
different fields reside on the same cache line.  Thus, we model a
memory location $\mloc \in \Mlocs$ as the combination $\aloc =
\tuple{\aref, n} \in \ALoc$ of a reference (variable that references a memory location) $\aref \in \Refs$ and a
non-negative offset $n \in \mathbb{Z}^{0+}$ from that reference. 

We next describe our core analysis approach.  For each PM location,
our analysis must compute: (1) whether a reference to that memory
location may have escaped to PM, and (2) whether all
stores to the memory location have been flushed to PM.
Although the original finite state machine in
Figure~\ref{fig:discipline} combines both properties into a single
finite state machine, we separated the two properties into two finite
state machines to simplify the presentation.

Figure~\ref{fig:escfsm} presents a finite state machine that
captures whether a memory location has escaped at a given program
point, and Figure~\ref{fig:esclattice} presents a lattice for our
escape analysis.  We use an escape state $\estate \in \Estate$ to
represent one of the two escape values, $\{\captured, \escaped\}$,
from the escape analysis lattice. We use a may-escape analysis to
conservatively catch all cases where a reference might have escaped.
Thus, we have the \escaped value lower in the lattice, and a merge of
an \escaped value with a \captured value yields the \escaped
state.  The analysis computes a map $\EMap \subseteq \Refs \times
\Estate$ from memory locations to escape states at each point in the
program.  The meet operator $\sqcap: \Estate \times \Estate
\rightarrow \Estate$ is defined by $\estate_1 \sqcap \estate_2 =
\text{lower}(\estate_1, \estate_2)$, which returns the lower of the
two lattice values.  We write $\estate_1 \succeq \estate_2$ if
$\estate_1$ is higher than or equal to $\estate_2$ in the lattice.

\begin{figure}[!htbp]
\begin{subfigure}{.24\textwidth}
\vspace{-.7cm}
\begin{center}
\includegraphics[scale=0.35]{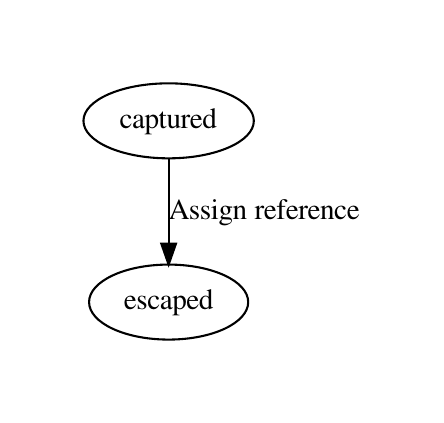}
\end{center}
\vspace{-.6cm}
\caption{Finite State Machine\label{fig:escfsm}}
\end{subfigure}
\begin{subfigure}{.24\textwidth}
\vspace{-.7cm}
\begin{center}
\includegraphics[scale=0.35]{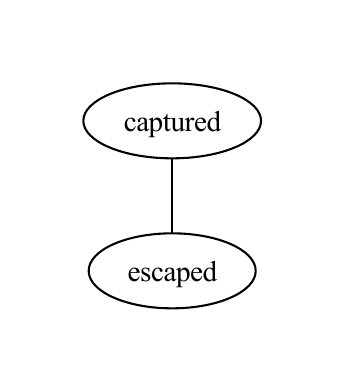}
\end{center}
\vspace{-.6cm}
\caption{Lattice\label{fig:esclattice}}
\end{subfigure}
\caption{Lattice and FSM for Escape Analysis}
\end{figure}

Figure~\ref{fig:clfsm} presents a finite state machine that summarizes
the semantics for persistency state of a memory location, and
Figure~\ref{fig:cllattice} presents the corresponding lattice.
We use a persistency state $\pstate \in \Pstate$ to
represent one of the three persistency state values, $\{ \clean,
\clwb, \dirty \}$.  The lattice is ordered in this fashion, because we
need to know whether a memory location may require a fence 
(\clwb) or whether it may require a fence and flush
(\dirty).  For example, if a reference is \clean on one path to a node
and \clwb on a different path to the node, the analysis must
conservative assume it is \clwb at the merge point.

The core analysis computes a map $\PMap{} \subseteq \ALoc \times
\Pstate$ from memory locations to persistency states.  The meet
operator $\sqcap: \Pstate \times \Pstate \rightarrow \Pstate$ and the
ordering operator $\succeq$ for $\Pstate$ are defined similar to the
ones for $\Estate$.

\begin{figure}[!htbp]
\begin{subfigure}{.24\textwidth}
\vspace{-.2cm}
\begin{center}
\includegraphics[scale=0.35]{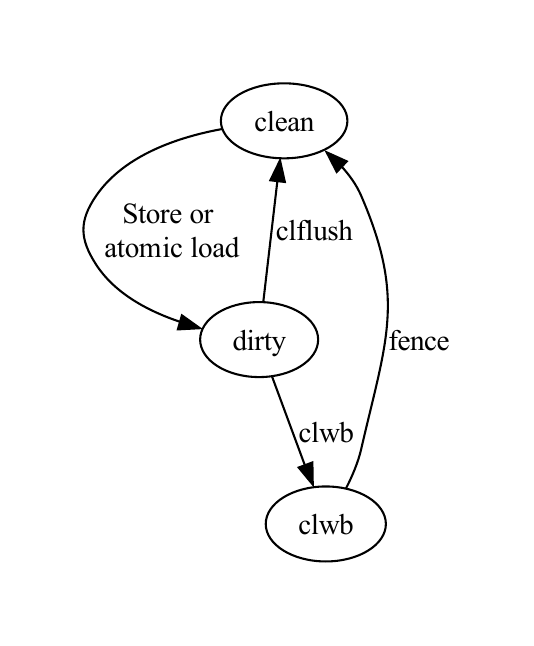}
\end{center}
\vspace{-.5cm}
\caption{Finite State Machine\label{fig:clfsm}}
\end{subfigure}
\begin{subfigure}{.24\textwidth}
\vspace{-.2cm}
\begin{center}
\includegraphics[scale=0.35]{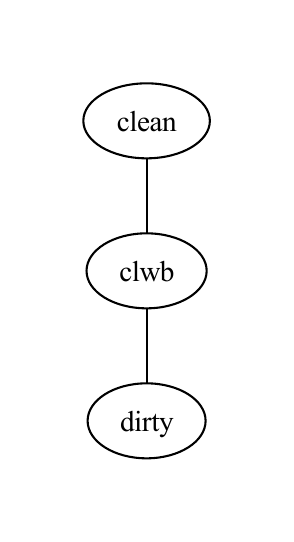}
\end{center}
\vspace{-.5cm}
\caption{Lattice\label{fig:cllattice}}
\end{subfigure}
\caption{Lattice and FSM for Persistency State Analysis}
\end{figure}

\newcommand{\tspace}{\vspace{.01cm}}
\newcommand{\ispace}{\vspace{.1cm}}

\begin{figure}[!htb]
\scriptsize
\begin{tabular}{p{0.2\linewidth} | p{0.65\linewidth}}
  \code{statement} & {
  $\EMap' = (\EMap - \text{KILL} ) \cup \text{GEN}$
  }\ispace\\
  \hline \hline
  \tspace\code{y=x} & {\tspace$\begin{aligned}
  U =& \AliasMap(\code{x}) \cup \{ \code{y} \}\\
  \AliasMap' =& \{ \tuple{\aref, S} \mid \tuple{\aref,S} \in \AliasMap \wedge \aref \notin U \} \cup \\
  & \{ \tuple{\aref, U} \mid \aref \in U \} \\
  \text{GEN} =&\{ \tuple{\code{y}, \estate{}} \mid \tuple{\code{x}, \estate} \in \EMap{}\} \\
  \text{KILL} =&\{ \tuple{\code{y}, *} \mid \tuple{\code{x}, \estate} \in \EMap{}\} \\
  \end{aligned}$\ispace}\\
  \hline
  
  \tspace\code{*y=x} \linebreak or \linebreak \code{*y=\&x->f} & {\tspace$\begin{aligned}
  U =& \AliasMap(\code{x}) \\
  \text{GEN} =&\{ \tuple{\code{y}, \escaped} \mid \tuple{\code{x}, \estate} \in \EMap{} \wedge \aref \in U\} \\
  \text{KILL} =&\{ \tuple{\code{y}, *} \mid \tuple{\code{x}, \estate} \in \EMap{} \wedge \aref \in U\} \\
  \end{aligned}$\ispace}\\
  \hline

  \tspace\code{y=*x} & {\tspace$\begin{aligned}
  U =& \AliasMap(\code{y}) \setminus  \{ \code{y} \}\\
  \AliasMap' =& \{ \tuple{\aref, S} \mid \\
  & \tuple{\aref,S} \in \AliasMap \wedge r \notin \AliasMap(\code{y}) \} \cup \\
  & \{ \tuple{\code{y}, \{ \code{y} \}} \} \cup \{ \tuple{\aref, U} \mid \aref \in U \} \\
  \text{GEN} =& \{ \tuple{\code{y}, \escaped} \} \\
  \text{KILL} =& \{ \tuple{\code{y}, *} \}
  %\hspace{-1.6cm}\text{KILL} =& \{ \tuple{\code{y}, \estate} \mid \tuple{\code{y}, \estate} \in \EMap{} \}
  \end{aligned}$\ispace}\\

  \hline
\end{tabular}
\caption{Transfer Functions for Escape Analysis, where \code{x} and \code{y} point to PM locations\label{fig:esctransfer}}
\end{figure}
\begin{figure}[!htb]
\scriptsize
\begin{tabular}{p{0.2\linewidth} | p{0.65\linewidth}}
  \code{statement} & {$\begin{aligned} \PMap{}' = (\PMap{} - \text{KILL} ) \cup \text{GEN} \end{aligned}$\ispace}\\
  \hline \hline

  \tspace\code{x->f=v} & {\tspace$\begin{aligned}\hspace{-2.1cm}\text{GEN}=&\{\tuple{\tuple{\code{x}, \text{offset}(\code{f})},\dirty}\}\\
  \text{KILL} =& \{ \tuple{\tuple{\code{x}, \text{offset}(\code{f})},\text{*}}\}\end{aligned}$\ispace}\\
  \hline

  \tspace\code{y=x->f} \text{where \code{f} is an} \text{atomic field} & {\tspace$\begin{aligned}\hspace{-2.2cm}\text{GEN}=&\{\tuple{\tuple{\code{x}, \text{offset}(\code{f})},\dirty}\}\\
  \text{KILL} =& \{ \tuple{\tuple{\code{x}, \text{offset}(\code{f})},\text{*}}\}\end{aligned}$\ispace}\\
  \hline

  \tspace\code{flush(\&x->f)} & {\tspace$\begin{aligned}\hspace{-2.2cm}\text{GEN}=&\{\tuple{\tuple{\code{x}, \text{offset}(\code{f})},\clean}\}\\
  \text{KILL} =& \{ \tuple{\tuple{\code{x}, \text{offset}(\code{f})},\text{*}}\}\end{aligned}$\ispace}\\
  \hline

  \tspace\code{clwb(\&x->f)} & {\tspace$\begin{aligned}
  \text{GEN}=&\{ \tuple{\aloc, \clwb} \mid
  \tuple{\aloc, \dirty} \in \PMap{} \wedge \aloc = \tuple{\code{x}, \text{offset}(\code{f})}\}\\
  \text{KILL}=&\{ \tuple{\aloc, *} \mid
  \tuple{\aloc, \dirty} \in \PMap{} \wedge \aloc = \tuple{\code{x}, \text{offset}(\code{f})}\}\\
  \end{aligned}$\ispace}\\
  \hline

  \tspace\code{fence} & {\tspace$\begin{aligned}
  \text{GEN}=&\{ \tuple{\aloc, \clean} \mid \tuple{\aloc, \clwb} \in \PMap{} \}\\
  \text{KILL}=&\{ \tuple{\aloc, *} \mid \tuple{\aloc, \clwb} \in \PMap{} \}\\
  \end{aligned}$\ispace}\\
  \hline

  \end{tabular}
\caption{Transfer Functions for Persistency State Analysis\label{fig:perstransfer}}
\end{figure}

\subsection{Checking Whether Objects are Captured\label{sec:esctransfer}}

We take a simple approach to escape analysis
--- once a reference to a newly allocated struct or array is stored to
any place other than a variable, we assume it has escaped.  The key
ideas of the analysis are that a newly allocated object starts in the
captured state.  For example, the statement \code{x=new} would result
in the analysis computing that \code{x} is in the captured state at
the program point immediately after this statement.  This is stored in
the map $\EMap$.  The analysis then computes the sets of variables
that may reference the same object.  After the statement \code{x=new},
the analysis would compute that \code{x} is the only variable to
reference the memory it references.  The analysis stores this
information in the alias map $\AliasMap \subseteq \Refs \times
\mathbb{P}(\Refs)$ from references to their aliases, where
$\mathbb{P}(\Refs)$ denotes the power set of $\Refs$.  If the value in
a variable \code{x} is stored to some heap location, the variable
\code{x} and all variables that may reference the same heap location
are marked as $\escaped$.

Figure~\ref{fig:esctransfer} present the transfer functions of our escape analysis.  We use the form $\EMap' = (\EMap -
\text{KILL} ) \cup \text{GEN}$ and use the *
symbol to match arbitrary states. We next discuss the transfer functions for key statements:

\noindent {\bf Assignments:}
When there is an assignment 
\code{y=x}, we replace the alias set of \code{y} (and all other aliases of \code{x}) with the aliases of \code{x} plus \code{y} itself.
At the same time, we update $\EMap$ to assign \code{y} to have the same escape state as \code{x}.

\noindent {\bf Stores:} When a store 
\code{*y = x} or \code{*y = \&x->f} stores to the address \code{x}
or the address of one of its fields \code{\&x->f} to any location, we
consider \code{x} and its aliases as escaped.
As an example, line~\ref{line:commit-store} in
Figure~\ref{fig:stack} makes \code{n} \escaped.

\noindent {\bf Loads:} When a load \code{y =
  *x} reads from \code{x}, and \code{y} points to persistent
memory, the analysis marks \code{y} as \escaped as the loaded value comes from a dereference rather than directly from a variable. Since \code{y} is overwritten, we set the alias set of \code{y} to only include itself, and remove \code{y} from the alias sets of its previous aliases.  Note that since \code{y} is escaped, correct alias information is not necessary.

\subsection{Analyzing Persistency States}
We next describe our persistency state analysis.
Figure~\ref{fig:perstransfer} presents the transfer functions for the
persistency state analysis, where we assume \code{x} is a PM location.
We express transfer
functions using GEN and KILL sets. The reader may note that variables
may alias, but this analysis does not track aliasing information.  The
key observation is that aliasing does not violate soundness---it
simply means that the same variable that is used to perform a store
must be used to flush the value. The lack of must alias information
may result in false positives in cases where one alias is used to
perform a store and another is used to flush the store.  In
Figure~\ref{fig:perstransfer}, we use $\tuple{\code{x},
  \text{offset}(\code{f})}$ to denote the memory location of
\code{x->f}.

\noindent {\bf Store/Atomic Store:} When an atomic or non-atomic store
writes to a field \code{x->f}, the persistency state of \code{x->f}
(\ie $\tuple{\code{x}, \text{offset}(\code{f})}$) is updated as \dirty, indicating that the location needs to be flushed.

\noindent {\bf Atomic Load:} As we mentioned in
Section~\ref{sec:verification}, an atomic load changes the state of
the loaded variable or field to \dirty.  The analysis removes the
old persistency state of \code{x->f} and marks it as \dirty.

\noindent {\bf Atomic RMW:} An atomic RMW is a combination of a fence, an atomic
load, and an atomic store, so we apply the transfer functions for a fence, atomic load
and store in sequence.
 
An atomic CAS is an atomic RMW if successful and an fence and atomic load,
otherwise.  Since our transfer functions have the same effects when
storing to a field \code{x->f} and when performing an atomic load from
\code{x->f}, we consider atomic CASs the same as atomic
RMWs.

\noindent {\bf Flush:} When flushing the address of a field
\code{x->f} with the stronger \code{clflush}, the
persistency state of \code{x->f} becomes clean.  Flushing the address
of a field \code{x->f} with \code{clwb} or \code{clflushopt}
changes its state to \code{clwb} if it was \dirty,
indicating it will become \clean at the next fence.

\noindent {\bf Fence:} For fences, the transfer function
leaves the persistency states untouched unless they are
\clwb. The locations whose persistency states are \clwb
are changed to the \clean state.

\subsection{Intraprocedural Violation Detection\label{sec:intra-error}}

\Tool consider three categories of violations. The first two
categories are 1) unflushed PM locations at function exits; and 2) a
store to an escaped PM location when a different PM location is
already escaped and non-clean. The third category involving arrays
will be discussed in Sections~\ref{sec:array}.

The first category of violations is checked at function exits. When we
complete the analysis of a function, we get the program states at each
function exit and take a union of the states by using meet
operators. If the state of any PM location that is not a function
parameter or the return value is escaped and non-clean, we report it
as a violation.

The second category of errors is checked at every program point. If a
PM location is escaped and non-clean and we perform a store to an escape PM location,  this is a
violation.  To address violations involving multiple threads, when
there is an escaped and non-clean location, a release operation to a
non-PM location (atomic store release or unlock) is
reported as a violation.  Recall from Section~\ref{sec:prelim}, a PM
location $\aloc$ is a pair $\tuple{\aref, n}$ of a reference and an
offset from the reference, so an escaped PM object with two or more
fields being non-clean is also reported as a violation.

\section{Durability}
In addition to robustness violations, the first category of violation from the previous section also includes all durability bugs, i.e.
any escaped PM objects that is not made durable by the end of the program. This follows from a simple argument: 
 the only objects that may be escaped and non-clean are the return values and parameters of the top-level main function, but none of these
objects can be PM objects. 
\section{Interprocedural Analysis\label{sec:inter}}

In this section, we first discuss extending the core analysis to be
interprocedural.  Then we discuss how we detect bugs that involve
objects reachable from function parameters and bugs that involve
multiple functions.  Lastly, we describe our array support
and how we detect references to PM.

\subsection{Context Sensitivity}

We handle function calls in a context sensitive manner using function
summaries. Each function has a \textit{function summary table}
that maps calling contexts to summarized results.

For a function with $n$ parameters, its \textit{calling context} has
the form $C \in (\Estate \times \Pstate)^n$,  containing the abstract
escape and persistency states for each function parameter. To reduce
the possible number of calling contexts, for a function parameter with
$m$ fields, we collapse the $m$ persistency states to a single
abstract persistency state with the abstraction:
\begin{align*}
\textit{Abs}&(\tuple{\estate, \pstate_1, ..., \pstate_m}) = \tuple{\estate, \text{lowest}(\pstate_1, ..., \pstate_n)}
\end{align*}
In other words, we use a single state (the lowest one) for all fields of a given parameter. 

Each calling context is then mapped to a \textit{summarized result} in
the function summary table. For a function $F$ with $n$ parameters, a
summarized result for a calling context $C$ has the form
$R_{\tuple{F,C}} \in (\Estate \times \Pstate)^{n+1}$, containing
abstract states for each function parameter and the return value. The
last element may be ignored for functions with no return values. The
abstraction can be reverted to get back program states by
approximating every field with the same persistency state:
\begin{align*}
\textit{AbsRev}&(\tuple{\estate, \pstate}) = \tuple{\estate, (\pstate, ..., \pstate)}
\end{align*}
This allows summarized results to be used to update the program state
after a function call.

To extend our alias analysis to be interprocedural, we also store the
aliasing information between function parameters and the return value
to summarized results. The summarized result additionally uses a
\textit{markObjDirEsp} bit to record if there are any escaped objects
that become non-clean in the callee, potentially causing violations with
other escaped non-clean objects in the caller, but are not captured by
the summarized result because they become clean again before the
callee returns. These extensions are straightforward and are omitted in
the representation here.

The interprocedural analysis
uses a worklist algorithm operating on pairs $\tuple{F, C}$ of
function $F$ and calling context $C$.  The worklist initially includes
all functions with the calling contexts of all parameters having the
lowest state $\tuple{\captured, \clean}$. When we complete the
analysis of a function with a calling context, we update the function
summary table, and every time $F$'s summarized results are updated, we
push all callers of $F$ with their calling contexts to the worklist.

To speed up the convergence of the algorithm, we allow summarized
results for a function $F$ to be approximated from results for higher
contexts. For two calling context $C = \tuple{\tuple{\estate_1,
    \pstate_1}, ..., \tuple{\estate_n, \pstate_n}}$ and $C' =
\tuple{\tuple{\estate_1', \pstate_1'}, ..., \tuple{\estate_n',
    \pstate_n'}}$, we say that $C$ is higher than $C'$ if $\estate_i
\succeq \estate_i'$ and $\pstate_n \succeq \pstate_n'$ for all $i$.
The meet operator $\sqcap: (\Estate \times \Pstate)^{n+1} \rightarrow
(\Estate \times \Pstate)^{n+1}$ for summarized results is also defined
as a point-wise meet.  There are then three cases when processing a
pair $\tuple{F, C}$:

\begin{itemize}
  \item \textbf{Case 1:} If we have already analyzed $F$ with the
    calling context $C$, Then the summarized result $R_{\tuple{F,C}}$
    is used to approximate the state of the parameters and the return
    value of $F$ after the call site.

  \item \textbf{Case 2:} Otherwise, if we have only analyzed $F$ with
    calling contexts higher that $C$, we can take the summarized
    results $R_{\tuple{F, C'}}$ for all calling contexts $C'$ higher
    than $C$, and use the merged result of this set of summarized
    results via the meet operator as an approximation, and push the
    pair $\tuple{F,C}$ to the worklist. This choice ensures
    monotonicity when processing call sites and thus preserves the
    termination guarantee for dataflow analysis.

  \item \textbf{Case 3:} If we have not analyzed $F$ with $C$ or any calling context higher than $C$ before, we approximate all parameters and the return value of $F$ as having the state $\tuple{\captured, \clean}$ and push the pair $\tuple{F,C}$ to the worklist.
\end{itemize}

\iffalse
\begin{figure}[!h]
\begin{subfigure}{0.4\textwidth}
{\scriptsize
  \begin{lstlisting}
  void F(int &a) {
    a = 1; /*@\label{line:store-a}@*/
    flush(&a);
  }

  void main() {
    x = 1;
    F(y);
  }
\end{lstlisting}
\caption{A buggy execution\label{fig:buggy-execution}}
  }
\end{subfigure}
\begin{subfigure}{0.4\textwidth}
  {\scriptsize
  \begin{lstlisting}
  void F(int &a) {
    a = 1;
    flush(&a);
  }

  void main() {
    x = 1;
    F(x);
  }
  \end{lstlisting}
  }
\caption{A bug-free execution\label{fig:bug-free-execution}}
\end{subfigure}
\caption{Assume that $\code{x}$ and $\code{y}$ reside on different cache lines and are escaped and clean initially.}
\end{figure}
\fi

\subsection{Handling Arrays\label{sec:array}}
To handle arrays, we abstract
array writes as a pair of an array reference and an index.  Formally,
we model an array element $\aloc = \tuple{\aref, n} \in \ALoc_a$ as a
reference (variable) $\aref \in \Refs$ and a  index $n \in
\mathbb{Z}^{0+}$ from that reference.  We  abstract the array index
using the variable that provided the value for the array dereference
operation.  Thus, our abstraction is only able to track dirty array
elements as long as the original index variable exists.  To ensure
soundness, when a function writes to some PM array, we 
require the written element to be flushed before the function returns or the index variable is changed/lost.

We conservatively assume all array elements have escaped and only
compute a map $\PMap{a} \in \ALoc_a \times \Pstate$ from array
elements to persistency states. The transfer functions for the array persistency analysis can be straightforward obtained
from Figure~\ref{fig:perstransfer}
by replacing the addresses being stored to and loaded from
with the array index variable.
At a function exit, we report warnings if the map $\PMap{a}$
contains any element whose persistency state is not clean.

\subsection{Interprocedural Violation Detection\label{sec:inter-error}}

The robustness violation detection mechanism for interprocedural
analysis is the same as the intraprocedural one, except that
robustness violations that involve multiple functions are also
reported.  Our treatment of pointer arithmetic conservatively reports violations when there are stores to or loads from PM
addresses computed by pointer arithmetic.

\subsection{Relaxing Strict Persistency: An Escape Hatch\label{sec:escape_hatch}}

Robustness violations are not always bugs;  design patterns like
\textit{link-and-persist}~\cite{nvtraverse}, \textit{pointer
  tagging}~\cite{detectable2021li}, and checksums can cause false positives by allowing observable low-level robustness violations without compromising high-level safety.

For example, the RECIPE benchmarks 
sometimes use atomic loads to read data from an atomic variable
that is never mutated.  This is done because the data structure packs
both mutable and immutable state into the same atomic variable.
Flushing such loads can incur high
overheads.

PM programs sometimes
write data and a checksum, and then persist both the data and the
checksum.  When accessing the data, the PM program first verifies the
checksum before using the data.  While this pattern is safe, it can
yield executions that are not equivalent to any execution under strict
consistency and thus violate robustness.

\Tool provides strict persistency by default.  However, if the developer knows that it is safe to escape
from strict persistency, \tool provides annotations
to ignore blocks that are exempt from strict persistency.  When
applied to either a store or an atomic load, this annotation tells
\tool that no flush or fence operation is needed.

\subsection{Termination and Correctness}

All transfer functions in the analysis are monotonic and the lattices are of finite height.  Thus, the dataflow analysis  terminates.  \textbf{The appendix presents correctness proofs.}

\section{Transformation}\label{sec:trans}

This section details \tool's approach to automatically inserting flush
and fence operations to ensure programs are free from missing
flush/fence bugs.

\subsection{Flush Insertion}

When \tool's analysis detects a violation, \tool fixes it by
immediately flushing the PM locations after they become \dirty using
\code{clwb}, which changes their state to \clwb in our analysis. Our
experience shows that conservatively flushing on atomic loads
can incur a significant overhead.
To alleviate this problem, we implement the FliT
transformation~\cite{flit} for atomic memory accesses, which uses
counters to signal whether all stores to a PM location have been
flushed.  By reading this counter after an atomic load, we
can determine whether the store may not have
completed its flush and thus the load must help.  This improves the
performance of many atomic-load-heavy benchmarks, as can be seen in
the evaluation results of Section~\ref{sec:eval}.

The flush instruction can be chosen from \code{clflush}, \code{clwb},
and \code{clflushopt}. We choose \code{clwb} as it has weaker ordering
properties than \code{clflush} and may retain the cache line on some
architectures as compared to \code{clflushopt}, potentially leading to
better performance.

If the length of dirty bytes in the object is not constant, which
could occur when modifying an object using functions like as
\code{memcpy}, a call to a variable-sized flush function is inserted
that issues \code{clwb} instructions over the entire range.

\subsection{Fence Insertion}
After flushes are inserted, \tool uses the detected robustness violations to perform fence insertions. We
insert a \code{sfence} before the instruction where a violation is
reported, eliminating the violation.

Specifically, fences are inserted for the two types of \tool
robustness violations as follows:
\begin{enumerate}
    \item \label{error1} A fence is inserted at function exit when
      there are objects in the $\clwb$ state.
    \item \label{error2} When there are two escaped and non-clean
      objects at the same time, a fence is inserted right before the
      second object becomes escaped and non-clean.
\end{enumerate}

Figure~\ref{fig:fence} provides an example of the second case,
showing a block of code with PM-pointers \code{x} and \code{y},
which are escaped and clean at the beginning.
In the unmodified version, the address of \code{x} is written
first, making it dirty, and the address of \code{y} is written later, causing a violation. To fix the violation, \tool inserts
a \code{clwb} instruction on \code{x} after the write, making its
state $\clwb$, and before the address of \code{y} is written, it
inserts a fence, removing the
violation.

\begin{figure}[!h]
\begin{subfigure}{.49\linewidth}
{\scriptsize
  \begin{lstlisting}
//x: <esc,clean>
*x = 1; 
//x: <esc,dirty>
...
//y: <esc,clean>
*y = 1; 
//x: <esc,dirty>, 
//y: <esc,dirty>
//violation!!
\end{lstlisting}
\caption{Before insertions}
}
\end{subfigure}
\begin{subfigure}{.49\linewidth}
  {\scriptsize
  \begin{lstlisting}
//x: <esc,clean>
*x = 1; 
//x: <esc,dirty>
clwb(x);
//x: <esc,clwb>
...
fence();
//x: <esc,clean>
*y = 1; 
//y: <esc,dirty>
//no violation
  \end{lstlisting}
  }
  \caption{After insertions}
  \end{subfigure}
    \caption{Example Insertion of Flushes/Fences.  esc = escaped}
  \label{fig:fence}
\end{figure}

Cases that \tool does not analyze precisely, such as pointer
arithmetic on PM addresses, involve only one object
and are handled by conservatively inserting a fence. The correctness
of this approach follows from our proof for the error detection
mechanism---the inserted fences turn objects that trigger
violations to \clean before the violation, no violation remains
after the fence insertion procedure completes.

Atomic loads turn object states to 
\dirty in our analysis due to potentially unflushed stores from other threads. However, flushes
of previous writes to PM only need to be serialized before the next
store to PM and not the next atomic load. In other words, there is no
need for fences between multiple atomic loads. Thus, fences
for atomic loads are deferred until the next store to PM
to avoid inserting redundant
fences.

\section{Evaluation \label{sec:eval}}

In this section, we present the evaluation of \tool on several benchmarks in order to answer the following research questions:
\begin{itemize}
    \item RQ1 Applicablility: Does \tool analyze PM programs in a reasonable time?
    \item RQ2 Performance: Do programs transformed by \tool have performance similar to the original programs?
    \item RQ3 Developer Burden: How much developer involvement does \tool require to relax the persistency requirements? 
\end{itemize}
We start by describing our system setup, the settings of \tool, and the benchmarks. Finally, we discuss how our research questions can be answered by our evaluation findings. 

\noindent \textbf{System Setup}: While one  motivation for this
work is CXL shared memory, it is not 
commercially available yet.  Thus, we have evaluated \tool
on Intel Optane PM, which requires the same use of flush and
fence instructions.

\Tool is implemented as a transformation pass in LLVM. All
benchmarks are compiled with clang/clang++ and  LLVM 
with optimization level O3. We used a Ubuntu 22.04.4 machine with a 16
core 2.4 GHz Intel Xeon Silver 4314 processor, 256 GB of RAM, and 256
GB of Optane PM.

\noindent \textbf{\tool Settings:} We evaluate \Tool with three
optimization settings to understand the contribution of
each optimization. 
\textit{PMRobust\textsubscript{base}} uses our alias analysis to insert a flush and a
fence immediately after every store and atomic load to PM.  \textit{PMRobust\textsubscript{opt}} adds our escape and
persistency state analysis to delay insertion of fences, as described
in Section~\ref{sec:trans}. 
\textit{PMRobust\textsubscript{flit}} adds  FliT on
top of \textit{PMRobust\textsubscript{opt}}.

\subsection{Benchmarks}

Our evaluation includes RECIPE~\cite{recipe}, a set of
high-performance concurrent index data structures modified to be crash-consistent for persistent
memory. RECIPE uses PMDK~\cite{pmdk}'s libvmmalloc to convert all
dynamic memory allocations to PM allocations. Of the data structures
in RECIPE, P-HOT, WOART, and LevelHash fail to compile with LLVM,
and we evaluate the remaining six (P-ART, P-BwTree, P-CLHT,
P-Masstree, FAST\&FAIR, CCEH) using the YCSB
benchmark~\cite{ycsb}. During evaluation, we found timing-related
bugs in CCEH and FAST\&FAIR and
discard those runs when they occur. These bugs also occur in the
unmodified programs. By comparison with the transformed version, we also discovered a number of
missing flushes and fences in the functions \code{getChild()} and
\code{getChildren()} in P-ART's node classes, which allow the client to retrieve values that are not yet persisted in the index. If the client then writes to PM
locations based on the retrieved value and a crash happens, the later update could be persisted while its source in the index might not, \ie a crash consistency bug. We added these flushes and
fences to P-ART. 

We also include Memcached~\cite{memcached}, a popular distributed
memory object caching system. Specifically, we choose a version of Memcached
that supports persistent memory via PMDK's libpmem. We use the
memaslap load testing tool from libMemcached~\cite{libmemcached} for
benchmarking.

Lastly, we evaluate our tool on PMDK~\cite{pmdk}, the most widely-used
open-source libraries for programming persistent memory, using a data
store implementation which is provided as one of its example programs. The data
store has a swappable backend that allows choosing from among seven
map data structures: \code{btree}, \code{rbtree}, \code{ctree},
\code{hashmap\_atomic}, \code{hashmap\_tx}, \code{hashmap\_rp}, and
\code{skiplist}.

For each benchmark, we show the average throughput of the original and
the transformed program over 5 runs for RECIPE and over 10 runs for 
memcached and PMDK's data store, and display the standard deviation as
error bars. In the transformed programs, we first removed the existing
flushes and fences from the original programs and then our tool
transforms them by inserting flushes and fences.

\subsection{RQ1: Applicability}
To answer RQ1, we measure the execution time of \tool pass and under the \textit{PMRobust\textsubscript{opt}} setting for each of the benchmarks. We then compare them against the execution time of the PM bug repair tool PMBugAssist~\cite{pmbugassist}, which uses SMT solving to generate targeted fixes for PM bugs contained in the execution trace, but is not able to guarantee the repaired program is free of further PM bugs. Although PMBugAssist belongs to a separate paradigm, there is currently no other PM bug repair tool that takes a purely static approach like \tool does. We use the comparison to demonstrate that \tool is able to run in reasonable time on the same programs as PMBugAssist, while providing stronger correctness guarantees. Thus, in addition to previously mentioned benchmarks, we include measurements on PMDK test programs evaluated by PMBugAssist. The times are presented in Table~\ref{table:Perf} along with the code size of the benchmarks. The time reported for PMBugAssist is the total time it takes on our machine to fix all bugs provided for each benchmark in the original evaluation.

\begin{table}[h!]
\begin{center}
{\scriptsize
\renewcommand{\arraystretch}{1.1}
\begin{tabular}{ |c|c|c|c|c|c|c|}
 \hline
 Benchmark & PMR Time(s)& PBA Time(s) & Code Size (KLOC)\\
 \hline
   data\_store        & 20.8 & N/A     & 95.2$^*$\\
   Memcached          & 199.1& 1861.49 & 23\\
   RECIPE             & 64   & 5.94    & 40.3\\
   obj\_constructor   & 17.9 & 0.1     & 95.4$^*$\\
   obj\_first\_next   & 17.7 & 0.3     & 95.5$^*$\\
   obj\_mem           & 18.1 & 149.5   & 95.3$^*$\\
   obj\_memops        & 28.3 & 0.3     & 95.8$^*$\\
   obj\_toid          & 18.1 & 0.1     & 95.3$^*$\\
   rpmemd\_db         & 12.2 & 0.1     & 25.3$^*$\\
   pmemspoil          & 14.8 & 0.1     & 47.8$^*$\\
   pmem\_memcpy       & 11.9 & 0.3     & 46.7$^*$\\
   pmem\_memmove      & 11.9 & 0.2     & 46.7$^*$\\
   pmem\_memset       & 11.9 & 198.9   & 46.6$^*$\\
   pmreorder\_simple  & 10.3 & 4810.0  & 46.6$^*$\\
   pmreorder\_flushes & 10.4 & 5911.8  & 46.6$^*$\\
   pmreorder\_stack   & 10.4 & 1.2     & 46.6$^*$\\
 \hline
   \multicolumn{4}{l}{\footnotesize $^*$ including sublibraries of PMDK}
\end{tabular}
}
\iffalse
libpmem 46.5 KLOC
libpmemobj + libpmem 95.2 KLOC
\fi
\end{center}
\caption{Runtime of \tool vs. PMBugAssist. PMR stands for \tool, PBA stands for PMBugAssist. N/A means not included in PBA's original evaluation }\vspace{-.2cm}
\label{table:Perf}
\end{table}

Table~\ref{table:Perf} shows that \tool's execution time roughly correlates with the code size, whereas the execution time of PMBugAssist varies widely, ranging from less than a second to close to an hour, depending on the length of execution traces, the number of bugs, and the SMT solving process. These measurements support the claim that \tool is applicable to programs targeted by prior tools, but with more consistent analysis times and stronger guarantees. 

\subsection{RQ2: Performance}\label{ref:rq2}
Here we explore RQ2 by comparing the performance of \tool's transformed programs with the originals. 

\noindent \textbf{RECIPE:}
We follow RECIPE's evaluation procedure~\cite{recipe} by
testing the ordered indexes (P-ART, P-BwTree, P-Masstree, FAST\&FAIR)
on two types of keys---\textbf{randint} (8 byte random integer keys)
and \textbf{string} (24 byte YCSB string keys), all uniformly
distributed, with YCSB workloads A, B, C, and E, and the unordered
index P-CLHT and CCEH on workloads A, B, and C with uniformly
distributed randint keys. In both cases, we set the thread count to 16
and first populate the index with 64M keys using LoadA, and then run
the respective workloads that insert and/or read the keys. 
%Note that fixing the crash consistency bug we discovered caused a decrease in the original P-ART's performance on read-heavy workloads, especially workload E that performs range scans.

We discovered a bottleneck when running YCSB workload E with the transformed P-BwTree that heavily uses an iterator data structure to traverse retrieved values. An iterator and the data it contains is
never reused after a crash, yet its data is allocated in PM. This
unnecessary usage of PM occurs because the RECIPE benchmarks, as
research prototypes, do not use separate memory allocators for PM and
DRAM allocations, but simply use libvmmalloc to perform all dynamic
allocations using PM. In a proper implementation, iterator data
should be allocated in DRAM. Thus, we eliminated the flush insertion on
iterator contexts with 5 annotation pairs, where each 
annotation pair ignores exactly only one program statement. Figure~\ref{fig:recipe_perf} shows the resulting
throughputs.

\begin{figure}[!htb]
\centering
\begin{subfigure}{.45\linewidth}
\includegraphics[width=\linewidth]{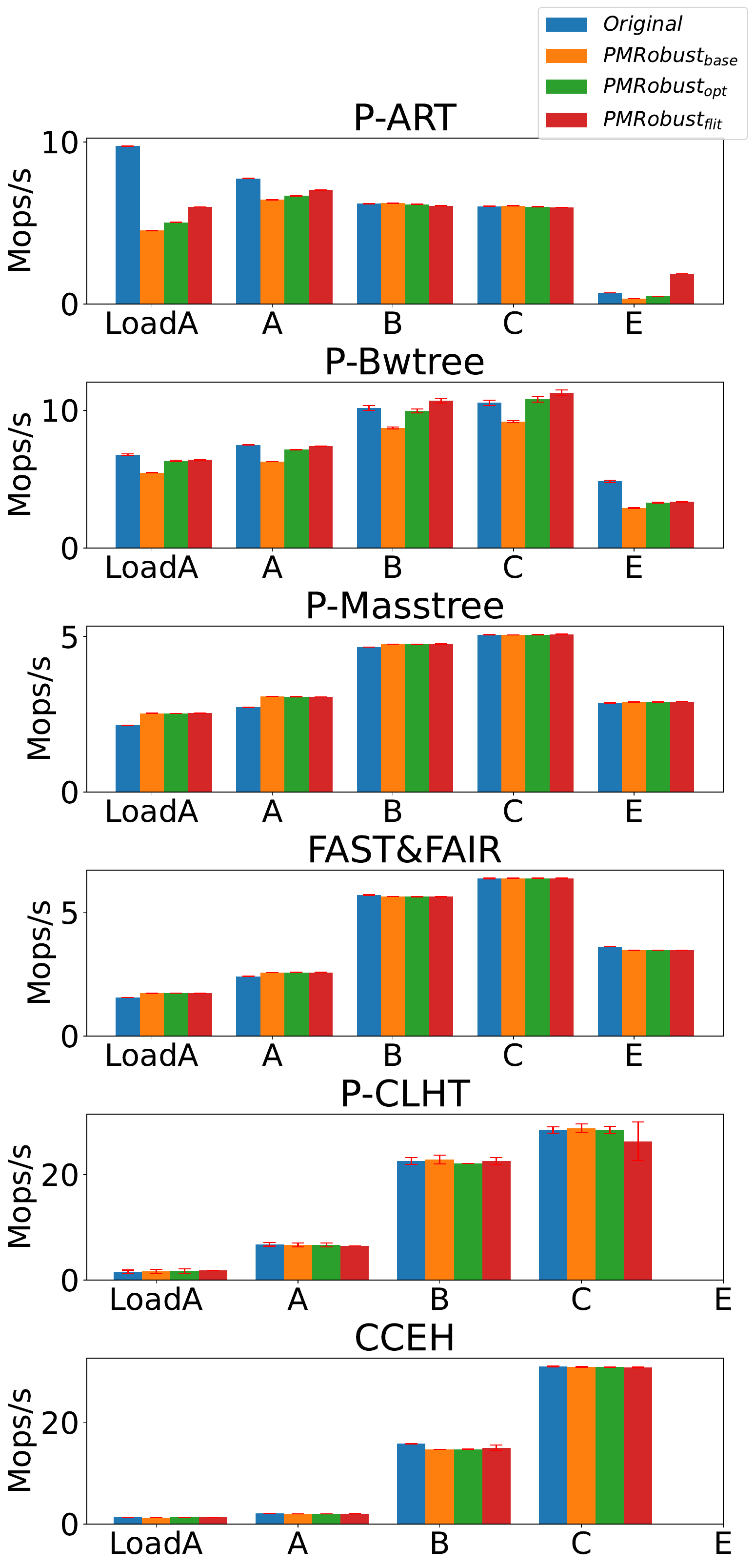}
\caption{Integer keys}
\end{subfigure}
~
\begin{subfigure}{.45\linewidth}
\includegraphics[width=\linewidth]{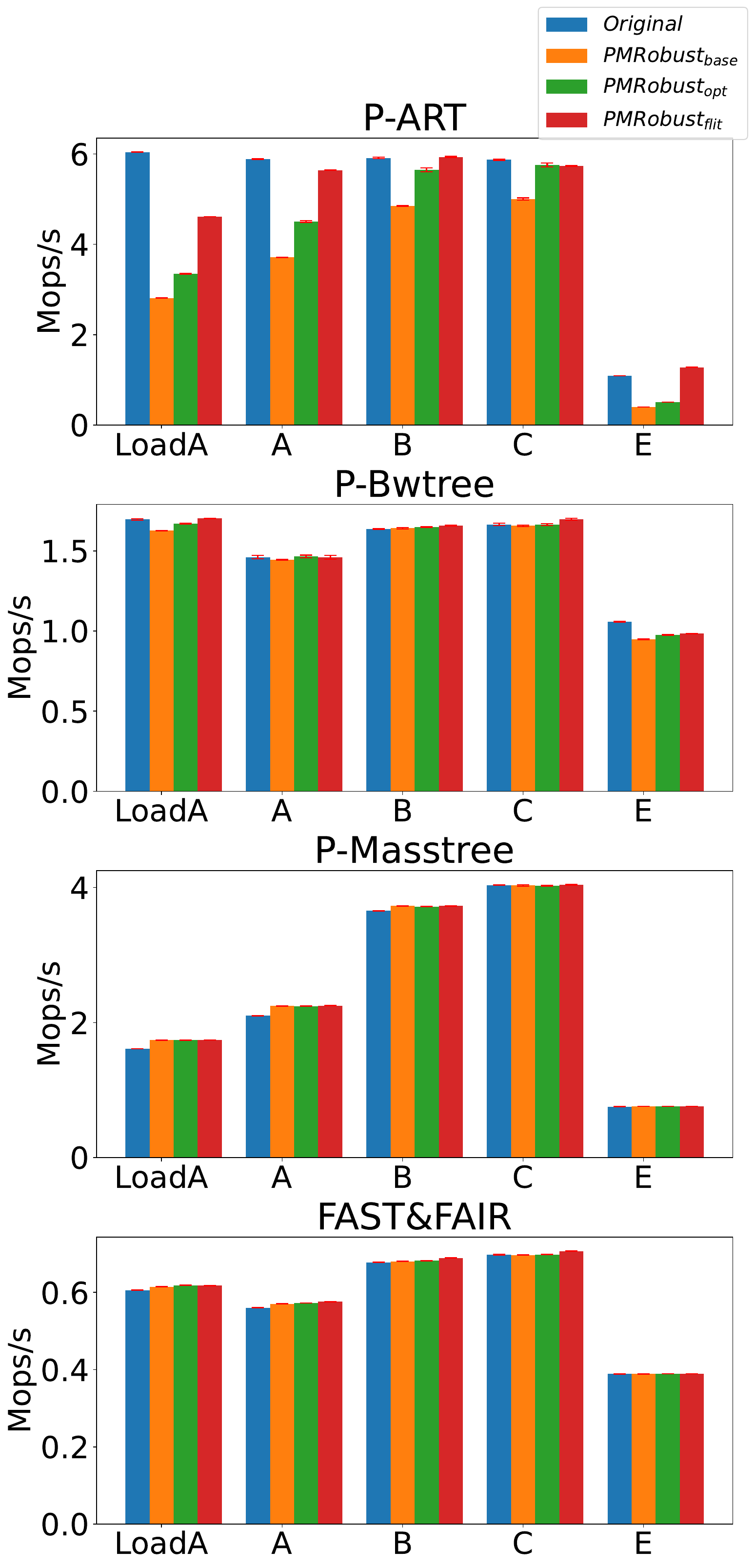}
\caption{String keys}
\end{subfigure}
\caption{Throughput of RECIPE indexes.  Larger is better.}
\label{fig:recipe_perf}
\end{figure}

\noindent\textbf{Memcached:}
We evaluate Memcached's throughput with 16 threads and 100K
operations using memaslap. The workload uses 16-byte keys and
1024-bytes values. The proportion of get and set operations is 90/10.
Figure~\ref{fig:memcached_perf} reports the throughputs.

\begin{figure}[!htb]
\vspace{-.3cm}
\begin{center}
\includegraphics[width=0.5\linewidth]{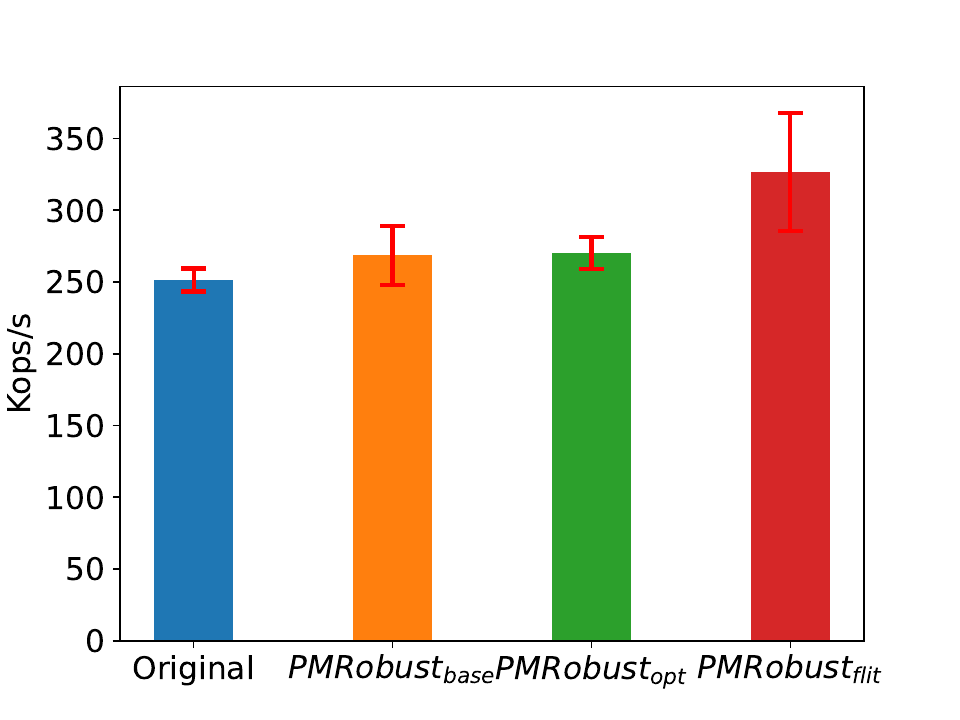}
\end{center}
\vspace{-.3cm}
\caption{Throughput of Memcached.  Larger is better.}
\label{fig:memcached_perf}
\end{figure}

\noindent\textbf{PMDK:}
During the data store benchmark, 100K randomly generated 8-byte
integer keys are inserted into the data store and then
removed. Figure~\ref{fig:data_store_perf} presents the results of the
seven map data structures.

\begin{figure}[!htb]
\vspace{-.4cm}
\begin{center}
\includegraphics[width=.8\linewidth]{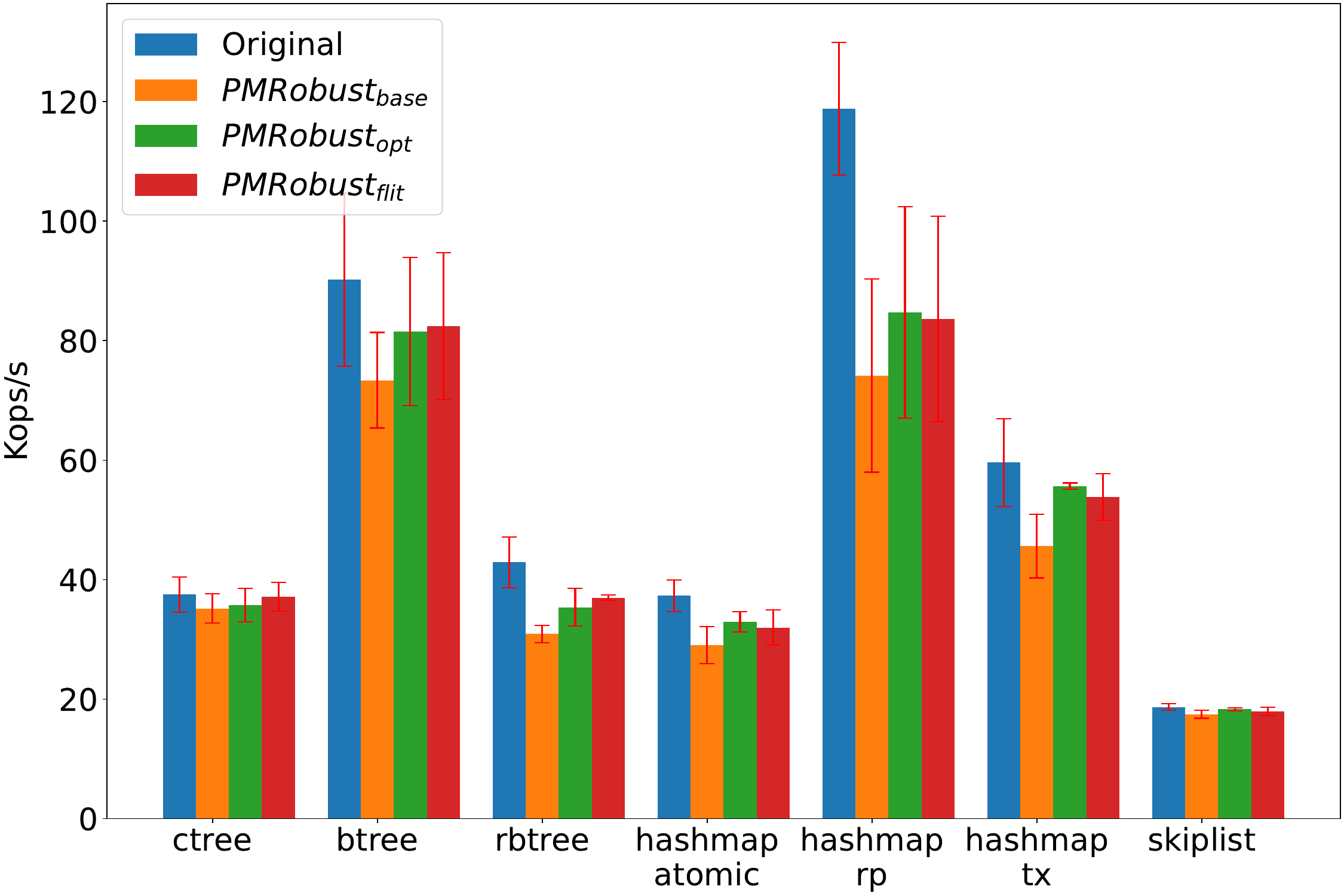}
\end{center}
\vspace{-.2cm}
\caption{Throughput of PMDK's data store.  Larger is better.\label{fig:data_store_perf}}
\end{figure}

\vspace{.2cm}
First, we see from the evaluation results that the base
transformation achieves performance on-par with the original programs
on a number of benchmarks, including P-CLHT, P-Masstree, Fast\&Fair,
CCEH, and Memcached. This suggests that in these benchmarks, extra
flushes and fences inserted by the base approach do not appear
on performance-critical paths. For the other benchmarks that have a
noticeable gap between the performance of
\textit{PMRobust\textsubscript{opt}} and the original program,
\textit{PMRobust\textsubscript{opt}} is able to either reduce, or in
some cases completely eliminate the performance gaps. The FliT
transformation of \textit{PMRobust\textsubscript{flit}} further
reduces the gap on many workloads for P-ART and P-BwTree due to their
frequent atomic loads. The improvement is most notable for P-ART,
which performs many atomic loads during its tree traversal on each key
insertion. \textit{PMRobust\textsubscript{flit}} outperforms the
original program significantly on workload E as it eliminates most of
the overhead from the originally missing flushes mentioned in
Section~\ref{ref:rq2}.  It also outperforms the original
Memcached by approximately 20\%, which can be explained by the fact
that \tool inlines flush instructions whereas the original benchmark
called a flush function.

Across all benchmarks, the geometric mean overhead over the original
programs is 11.21\% for \textit{PMRobust\textsubscript{base}}, and
6.41\% for \textit{PMRobust\textsubscript{opt}}, and only \overhead
for \textit{PMRobust\textsubscript{flit}}.

Overall, the performance results show that \tool's automatic flush and
fences insertion is able to match the performance of the original
programs on most benchmarks using our dataflow analysis and FliT, while only using a few user annotations. This answers the research question in the affirmative---\tool is able to produce programs with performance close to the originals,

\subsection{RQ3: Developer Burden}\label{ref:rq3}
In this section we answer RQ3 regarding the extent users of \tool need to be involved to relax the persistency requirement using manual annotations. As we noted in \ref{ref:rq2}, there was only one instance of significant bottleneck that we removed with an annotation during the performance evaluation. We identified the source of the bottleneck by profiling. 

Recall that we noted that this one example was unusual, and due to the fact that RECIPE does not implement a reasonable memory allocation strategy.  We would not expect to need annotations for this same reason in non-research software.
Based on our experience, \tool does not impose too much burden on developers as bottlenecks should be rare, and fixing them only requires familiarity with profilers, which would be reasonable to expect from developers working on low-level software such as PM programs.

\subsection{Threats to Validity}

\Tool currently does not implement support for function pointers and may produce imprecise results for them.  This can be addressed by ensuring that all objects are clean before making a call using a function pointer and ensuring that if a function has its address taken, that all objects must be clean at exit.  This could also potentially be handled by pointer analysis.

Our current implementation uses whole-program analysis.  As a result, we perform
our analysis and transformation passes after all translation units are
linked together.
This
requirement can be removed by treating interactions with code outside of
the current compilation unit conservatively. 

\section{Related Work}

There is work on checking/testing PM programs to find bugs. In
particular, XFDetector~\cite{xfdetector} uses a finite state machine
to track the consistency and persistency of persistent
data. PMTest~\cite{pmtest} lets developers annotate a program with
checking rules to infer the persistency status of writes and ordering
constraints between writes. Pmemcheck~\cite{pmemcheck} checks how many
stores were not made persistent and detects memory overwrites using
binary rewriting. Yat~\cite{yat} model checks
PM programs. 
Agamotto~\cite{agamotto} finds bugs in PM programs by
using symbolic execution. An algorithm by Huang et al.~\cite{pminvariant} infers invariants from PM programs that are then used to check for bugs.
Although these tools are able to find many bugs, none of these tools
can assure the absence of flush/fence bugs like \tool can.  POG~\cite{pog} and
Pierogi~\cite{pierogi} provide logics that can be used to manually
reason about program behaviors.

A line of work~\cite{nv-htm,crafty-pldi20,persistent-htm-giles-2017,dudetm} uses (software or hardware) transactions to provide
(failure and thread) atomicity. NVL-C provides language and compiler
support for the use of transactions to access non-volatile
memory~\cite{nvlc}.  While NVL-C can provide crash consistency, it
does so by incurring the overheads of using transactions to provide
crash consistency.  Another line of
work~\cite{atlas-follow-up,atlas,nvthreads,justdo-logging,ido}
advocates use of locks or synchronization-free
regions~\cite{persistency-sfr}.  Memento~\cite{memento} provides
detectable checkpointing---it extends standard checkpoint with support
to allow the system to be able to detect the status of in flight
operations when the crash occurred.  These approaches typically incur
large overheads to support the necessary logging.

StaticPersist~\cite{staticpersist} is a static analysis to
determine which objects must be allocated in PM.  The
idea is to use annotations to declare a set of durable roots, and the
analysis determines which objects are reachable from the 
roots. AutoPersist~\cite{autopersist} is a Java extension to support
NVM.  Developers specify durable roots and when an
object becomes reachable from a durable root, AutoPersist moves it to
PM.  Both
StaticPersist and AutoPersist provide a higher-level API for
programming PM, while \tool targets a
lower-level model by automatically inserting flush and
fence operations.

Hippocrates~\cite{hippocrates-asplos21} and
PMBugAssist~\cite{pmbugassist} insert flushes
and fences to repair PM bugs.  Hippocrates chooses the placement of
flushes and fences using a reduction procedure, whereas PMBugAssist
uses a SMT solver. They both focus on
repairing specific PM bugs given as program input and rely on other PM
bug detection tools to produce bug traces, which is different from our
task of exhaustively detecting potential PM bugs and fixing them at
the same time.

Our work is related to a line of work on static fence
insertion for concurrent programs to ensure sequential consistency~\cite{musketeer2017, lee2000relaxed, fang2003automatic}. The static approach to fence insertion has so far not been applied to PM programs, a
gap which we bridge. In terms of techniques, previous work such as musketeer~\cite{musketeer2017} and pensieve~\cite{fang2003automatic} focused on variants of delay set analysis~\cite{delayset} combined with flow-insensitive escape analysis to minimise fence insertions, whereas we make use of a flow-sensitive escape analysis to avoid redundant fence insertions. 

FliT~\cite{flit} presents a technique that uses counters to
eliminate flush operations on atomic loads, which we applied in \tool
to further reduce the overhead it introduces.

\section{Conclusion\label{sec:conc}}

Correctly using flush and fence operations is notoriously hard.   \Tool employs  compiler analysis and transformation to automatically insert flush and fence instructions, providing strong persistency via robustness, a sufficient condition for the correct usage of flush and fence operations in PM programs.  \Tool ensures the absence of flush and fence bugs and eliminates the time-consuming and error-prone task of manually inserting these operations.

\section*{Acknowledgements} We would like to thank the anonymous reviewers for their thorough and insightful comments that helped us improve the paper.  This work is supported by National Science Foundation grants CCF-2006948, CCF-2102940, and CCF-2220410. 

\bibliographystyle{plain}
\bibliography{paper}

\begin{thebibliography}{10}

\bibitem{musketeer2017}
Jade Alglave, Daniel Kroening, Vincent Nimal, and Daniel Poetzl.
\newblock Don’t sit on the fence: A static analysis approach to automatic fence insertion.
\newblock {\em ACM Trans. Program. Lang. Syst.}, 39(2), May 2017.

\bibitem{staticpersist}
Sorav Bansal.
\newblock Staticpersist: Compiler support for pmem programming.
\newblock In Cezara Dragoi, Michael Emmi, and Jingbo Wang, editors, {\em Verification, Model Checking, and Abstract Interpretation}, pages 44--65, Cham, 2023. Springer Nature Switzerland.

\bibitem{pierogi}
Eleni~Vafeiadi Bila, Brijesh Dongol, Ori Lahav, Azalea Raad, and John Wickerson.
\newblock View-based {Owicki–Gries} reasoning for persistent {x86-TSO}.
\newblock In {\em Programming Languages and Systems: 31st European Symposium on Programming, ESOP 2022, Held as Part of the European Joint Conferences on Theory and Practice of Software, ETAPS 2022, Munich, Germany, April 2–7, 2022, Proceedings}, page 234–261, Berlin, Heidelberg, 2022. Springer-Verlag.

\bibitem{atlas-follow-up}
Hans-J. Boehm and Dhruva~R. Chakrabarti.
\newblock Persistence programming models for non-volatile memory.
\newblock In {\em Proceedings of the 2016 ACM SIGPLAN International Symposium on Memory Management}, ISMM 2016, pages 55--67, New York, NY, USA, 2016. Association for Computing Machinery.

\bibitem{libmemcached}
{Brian Aker}.
\newblock libmemcached.
\newblock https://libmemcached.org/libMemcached.html, 2011.

\bibitem{nv-htm}
Daniel Castro, Paolo Romano, and Jo{\~a}o Barreto.
\newblock {Hardware transactional memory meets memory persistency}.
\newblock In {\em 2018 IEEE International Parallel and Distributed Processing Symposium}, IPDPS '18, pages 368--377, Vancouver, BC, Canada, 2018. Institute of Electrical and Electronics Engineers.

\bibitem{atlas}
Dhruva~R. Chakrabarti, Hans-J. Boehm, and Kumud Bhandari.
\newblock Atlas: Leveraging locks for non-volatile memory consistency.
\newblock In {\em Proceedings of the 2014 ACM International Conference on Object Oriented Programming Systems Languages \& Applications}, OOPSLA '14, pages 433--452, New York, NY, USA, 2014. Association for Computing Machinery.

\bibitem{memento}
Kyeongmin Cho, Seungmin Jeon, Azalea Raad, and Jeehoon Kang.
\newblock Memento: A framework for detectable recoverability in persistent memory.
\newblock {\em Proc. ACM Program. Lang.}, 7(PLDI), jun 2023.

\bibitem{ycsb}
Brian~F Cooper, Adam Silberstein, Erwin Tam, Raghu Ramakrishnan, and Russell Sears.
\newblock Benchmarking cloud serving systems with {YCSB}.
\newblock In {\em Proceedings of the 1st ACM symposium on Cloud computing}, pages 143--154, 2010.

\bibitem{memcached}
Inc. Danga~Interactive.
\newblock Memcached.
\newblock \url{https://github.com/lenovo/memcached-pmem}, November 2018.

\bibitem{nvtraverse}
Tudor David, Aleksandar Dragojevi\'{c}, Rachid Guerraoui, and Igor Zablotchi.
\newblock Log-free concurrent data structures.
\newblock In {\em Proceedings of the 2018 USENIX Conference on Usenix Annual Technical Conference}, USENIX ATC '18, pages 373--385, USA, 2018. USENIX Association.

\bibitem{nvlc}
Joel~E. Denny, Seyong Lee, and Jeffrey~S. Vetter.
\newblock {NVL-C}: Static analysis techniques for efficient, correct programming of non-volatile main memory systems.
\newblock In {\em Proceedings of the 25th ACM International Symposium on High-Performance Parallel and Distributed Computing}, HPDC '16, page 125–136, New York, NY, USA, 2016. Association for Computing Machinery.

\bibitem{pmdebugger-asplos21}
Bang Di, Jiawen Liu, Hao Chen, and Dong Li.
\newblock Fast, flexible, and comprehensive bug detection for persistent memory programs.
\newblock In {\em Proceedings of the 26th ACM International Conference on Architectural Support for Programming Languages and Operating Systems}, ASPLOS 2021, pages 503--516, New York, NY, USA, 2021. Association for Computing Machinery.

\bibitem{fang2003automatic}
Xing Fang, Jaejin Lee, and Samuel~P Midkiff.
\newblock Automatic fence insertion for shared memory multiprocessing.
\newblock In {\em Proceedings of the 17th annual international conference on Supercomputing}, pages 285--294, 2003.

\bibitem{llvm-8}
LLVM Foundation.
\newblock Llvm-8.
\newblock \url{https://github.com/llvm/llvm-project/tree/release/8.x}, August 2019.

\bibitem{witcher-sosp21}
Xinwei Fu, Wook-Hee Kim, Ajay~Paddayuru Shreepathi, Mohannad Ismail, Sunny Wadkar, Dongyoon Lee, and Changwoo Min.
\newblock Witcher: Systematic crash consistency testing for non-volatile memory key-value stores.
\newblock In {\em Proceedings of the 28th ACM Symposium on Operating Systems Principles}, SOSP 2021, pages 100--115, New York, NY, USA, 2021. Association for Computing Machinery.

\bibitem{crafty-pldi20}
Kaan Gen\c{c}, Michael~D. Bond, and Guoqing~Harry Xu.
\newblock Crafty: Efficient, {HTM}-compatible persistent transactions.
\newblock In {\em Proceedings of the 41st ACM SIGPLAN Conference on Programming Language Design and Implementation}, PLDI 2020, pages 59--74, New York, NY, USA, 2020. Association for Computing Machinery.

\bibitem{persistent-htm-giles-2017}
Ellis Giles, Kshitij Doshi, and Peter Varman.
\newblock Continuous checkpointing of {HTM} transactions in {NVM}.
\newblock In {\em Proceedings of the 2017 ACM SIGPLAN International Symposium on Memory Management}, ISMM 2017, pages 70--81, New York, NY, USA, 2017. Association for Computing Machinery.

\bibitem{persistency-sfr}
Vaibhav Gogte, Stephan Diestelhorst, William Wang, Satish Narayanasamy, Peter~M. Chen, and Thomas~F. Wenisch.
\newblock Persistency for synchronization-free regions.
\newblock In {\em Proceedings of the 39th ACM SIGPLAN Conference on Programming Language Design and Implementation}, PLDI 2018, pages 46--61, New York, NY, USA, 2018. Association for Computing Machinery.

\bibitem{psan}
Hamed Gorjiara, Weiyu Luo, Alex Lee, Guoqing~Harry Xu, and Brian Demsky.
\newblock Checking robustness to weak persistency models.
\newblock In {\em Proceedings of the 43rd ACM SIGPLAN International Conference on Programming Language Design and Implementation}, PLDI 2022, page 490–505, New York, NY, USA, 2022. Association for Computing Machinery.

\bibitem{jaaru}
Hamed Gorjiara, Guoqing~Harry Xu, and Brian Demsky.
\newblock Jaaru: Efficiently model checking persistent memory programs.
\newblock In {\em Proceedings of the 26th ACM International Conference on Architectural Support for Programming Languages and Operating Systems}, ASPLOS 2021, pages 415--428, New York, NY, USA, 2021. Association for Computing Machinery.

\bibitem{yashme}
Hamed Gorjiara, Guoqing~Harry Xu, and Brian Demsky.
\newblock Yashme: Detecting persistency races.
\newblock In {\em Proceedings of the 27th ACM International Conference on Architectural Support for Programming Languages and Operating Systems}, ASPLOS 2022, page 830–845, New York, NY, USA, 2022. Association for Computing Machinery.

\bibitem{nvthreads}
Terry Ching-Hsiang Hsu, Helge Br\"{u}gner, Indrajit Roy, Kimberly Keeton, and Patrick Eugster.
\newblock {NVthreads}: Practical persistence for multi-threaded applications.
\newblock In {\em Proceedings of the Twelfth European Conference on Computer Systems}, EuroSys '17, pages 468--482, New York, NY, USA, 2017. Association for Computing Machinery.

\bibitem{pminvariant}
Zunchen Huang, Srivatsan Ravi, and Chao Wang.
\newblock Discovering likely program invariants for persistent memory.
\newblock In {\em Proceedings of the 39th IEEE/ACM International Conference on Automated Software Engineering}, ASE '24, page 1795–1807, New York, NY, USA, 2024. Association for Computing Machinery.

\bibitem{pmbugassist}
Zunchen Huang and Chao Wang.
\newblock Constraint based program repair for persistent memory bugs.
\newblock In {\em Proceedings of the IEEE/ACM 46th International Conference on Software Engineering}, ICSE '24, New York, NY, USA, 2024. Association for Computing Machinery.

\bibitem{pmdk}
{Intel Corporation}.
\newblock Persistent memory development kit.
\newblock https://pmem.io/pmdk/, 2020.

\bibitem{justdo-logging}
Joseph Izraelevitz, Terence Kelly, and Aasheesh Kolli.
\newblock Failure-atomic persistent memory updates via {JUSTDO} logging.
\newblock In {\em Proceedings of the Twenty-First International Conference on Architectural Support for Programming Languages and Operating Systems}, ASPLOS '16, pages 427--442, New York, NY, USA, 2016. Association for Computing Machinery.

\bibitem{pmemcheck}
Tomasz Kapela.
\newblock An introduction to pmemcheck (part 1) - basics.
\newblock \url{https://pmem.io/2015/07/17/pmemcheck-basic.html}, July 2015.

\bibitem{yat}
Philip Lantz, Subramanya Dulloor, Sanjay Kumar, Rajesh Sankaran, and Jeff Jackson.
\newblock Yat: A validation framework for persistent memory software.
\newblock In {\em Proceedings of the 2014 USENIX Annual Technical Conference}, pages 433--438, Philadelphia, PA, June 2014. {USENIX} Association.

\bibitem{lee2000relaxed}
Jaejin Lee and D.A. Padua.
\newblock Hiding relaxed memory consistency with compilers.
\newblock In {\em Proceedings 2000 International Conference on Parallel Architectures and Compilation Techniques (Cat. No.PR00622)}, pages 111--122, 2000.

\bibitem{recipe}
Se~Kwon Lee, Jayashree Mohan, Sanidhya Kashyap, Taesoo Kim, and Vijay Chidambaram.
\newblock {RECIPE}: Converting concurrent {DRAM} indexes to persistent-memory indexes.
\newblock In {\em Proceedings of the 27th ACM Symposium on Operating Systems Principles}, SOSP '19, pages 462--477, New York, NY, USA, 2019. Association for Computing Machinery.

\bibitem{detectable2021li}
Nan Li and Wojciech Golab.
\newblock Brief announcement: Detectable sequential specifications for recoverable shared objects.
\newblock In {\em Proceedings of the 2021 ACM Symposium on Principles of Distributed Computing}, PODC'21, pages 557--560, New York, NY, USA, 2021. Association for Computing Machinery.

\bibitem{dudetm}
Mengxing Liu, Mingxing Zhang, Kang Chen, Xuehai Qian, Yongwei Wu, Weimin Zheng, and Jinglei Ren.
\newblock {DudeTM}: Building durable transactions with decoupling for persistent memory.
\newblock In {\em Proceedings of the Twenty-Second International Conference on Architectural Support for Programming Languages and Operating Systems}, ASPLOS '17, pages 329--343, New York, NY, USA, 2017. Association for Computing Machinery.

\bibitem{ido}
Qingrui Liu, Joseph Izraelevitz, Se~Kwon Lee, Michael~L. Scott, Sam~H. Noh, and Changhee Jung.
\newblock {iDO}: Compiler-directed failure atomicity for nonvolatile memory.
\newblock In {\em Proceedings of the 51st Annual IEEE/ACM International Symposium on Microarchitecture}, MICRO-51, pages 258--270, Virtual Event , Greece, 2018. Institute of Electrical and Electronics Engineers.

\bibitem{xfdetector}
Sihang Liu, Korakit Seemakhupt, Yizhou Wei, Thomas Wenisch, Aasheesh Kolli, and Samira Khan.
\newblock Cross-failure bug detection in persistent memory programs.
\newblock In {\em Proceedings of the Twenty-Fifth International Conference on Architectural Support for Programming Languages and Operating Systems}, ASPLOS '20, pages 1187--1202, New York, NY, USA, 2020. Association for Computing Machinery.

\bibitem{pmtest}
Sihang Liu, Yizhou Wei, Jishen Zhao, Aasheesh Kolli, and Samira Khan.
\newblock {PMTest}: A fast and flexible testing framework for persistent memory programs.
\newblock In {\em Proceedings of the Twenty-Fourth International Conference on Architectural Support for Programming Languages and Operating Systems}, ASPLOS '19, pages 411--425, New York, NY, USA, 2019. Association for Computing Machinery.

\bibitem{hippocrates-asplos21}
Ian Neal, Andrew Quinn, and Baris Kasikci.
\newblock Hippocrates: Healing persistent memory bugs without doing any harm.
\newblock In {\em Proceedings of the 26th ACM International Conference on Architectural Support for Programming Languages and Operating Systems}, ASPLOS 2021, pages 401--414, New York, NY, USA, 2021. Association for Computing Machinery.

\bibitem{agamotto}
Ian Neal, Ben Reeves, Ben Stoler, and Andrew Quinn.
\newblock {AGAMOTTO}: How persistent is your persistent memory application?
\newblock In {\em 14th {USENIX} Symposium on Operating Systems Design and Implementation ({OSDI} 20)}, pages 1047--1064, Banff, Alberta, November 2020. {USENIX} Association.

\bibitem{memorypelley2014}
Steven Pelley, Peter~M. Chen, and Thomas~F. Wenisch.
\newblock Memory persistency.
\newblock In {\em 2014 ACM/IEEE 41st International Symposium on Computer Architecture (ISCA)}, pages 265--276, Minneapolis, MN, USA, 2014. Institute of Electrical and Electronics Engineers.

\bibitem{pog}
Azalea Raad, Ori Lahav, and Viktor Vafeiadis.
\newblock Persistent {Owicki-Gries} reasoning: A program logic for reasoning about persistent programs on {Intel}-x86.
\newblock {\em Proc. ACM Program. Lang.}, 4(OOPSLA), nov 2020.

\bibitem{intelx86}
Azalea Raad, John Wickerson, Gil Neiger, and Viktor Vafeiadis.
\newblock Persistency semantics of the {Intel-x86} architecture.
\newblock {\em Proceedings of the ACM on Programming Languages}, 4(POPL), December 2020.

\bibitem{samsungcxlssd}
Memory-semantic {SSD}.
\newblock \url{https://samsungmsl.com/ms-ssd/}, 2023.

\bibitem{delayset}
Dennis Shasha and Marc Snir.
\newblock Efficient and correct execution of parallel programs that share memory.
\newblock {\em ACM Trans. Program. Lang. Syst.}, 10(2):282–312, April 1988.

\bibitem{autopersist}
Thomas Shull, Jian Huang, and Josep Torrellas.
\newblock Autopersist: an easy-to-use java nvm framework based on reachability.
\newblock In {\em Proceedings of the 40th ACM SIGPLAN Conference on Programming Language Design and Implementation}, PLDI 2019, page 316–332, New York, NY, USA, 2019. Association for Computing Machinery.

\bibitem{flit}
Yuanhao Wei, Naama Ben-David, Michal Friedman, Guy~E. Blelloch, and Erez Petrank.
\newblock {FliT}: A library for simple and efficient persistent algorithms.
\newblock In {\em Proceedings of the 27th ACM SIGPLAN Symposium on Principles and Practice of Parallel Programming}, PPoPP '22, page 309–321, New York, NY, USA, 2022. Association for Computing Machinery.

\bibitem{cflalias}
Xin Zheng and Radu Rugina.
\newblock Demand-driven alias analysis for {C}.
\newblock page 197–208, 2008.

\end{thebibliography}

\clearpage
\appendix

\setcounter{theorem}{0}
\section{Proof}

\begin{theorem}\label{thm:thm1}
Supposed that function $F$ calls $G(x_1, ..., x_n)$ and some objects are
escaped and non-clean in the program state of $F$ right before calling
$G$.  If the \textit{marksObjDirEsc} bit is set in $G$ under the
calling context, while none of $x_1, ..., x_n$ is escaped and non-clean,
then this is a robustness violation.
\end{theorem}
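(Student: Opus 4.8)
The plan is to produce an explicit program point, reached during the execution of $G$ under the given calling context, at which the second category of robustness violation from Section~\ref{sec:intra-error} holds verbatim. Let $a$ be one of the objects that is escaped and non-clean in $F$'s program state immediately before the call to $G$; the hypotheses then supply, through the \textit{marksObjDirEsc} bit, a second object that coexists with $a$ in an offending state somewhere inside $G$.

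First I would unpack the bit. By its definition in Section~\ref{sec:inter}, \textit{marksObjDirEsc} being set guarantees a program point $P$ in $G$ (under this context) at which some escaped object $b$ transitions to a non-clean state. Since the transfer functions of Figure~\ref{fig:perstransfer} drive an escaped location to \dirty only on a store to that location or on an atomic load treated as such, $P$ is precisely a store (or atomic load) to the escaped object $b$. I would also record that $b \neq a$: the object $a$ is already non-clean on entry to $G$, so it does not \emph{become} non-clean within $G$, whereas $b$ does.

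The central step is a frame argument establishing that $a$ is still escaped and non-clean when control reaches $P$. Because the hypothesis states that none of $x_1, \ldots, x_n$ is escaped and non-clean while $a$ is, the object $a$ cannot be any of $G$'s parameters. The summary abstraction describes $G$'s effect solely in terms of the states of objects reachable from its parameters together with the objects $G$ itself allocates, so $G$ holds no reference through which it could flush $a$, and the escape and persistency states of $a$ are therefore preserved across the entire call. Consequently, at $P$ we have an escaped and non-clean object $a$ together with a store to the escaped object $b$, which is exactly a robustness violation by the definition in Section~\ref{sec:intra-error}, completing the argument.

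I expect the frame argument to be the main obstacle, since it is the step that must rule out $G$ silently flushing $a$ to \clean before reaching $P$. The delicate subcase is when $a$ is reachable from a parameter $x_i$ whose own collapsed state is clean: here I would appeal to the soundness of the calling-context abstraction established for the interprocedural analysis, arguing that any reference chain $G$ could follow to reach and flush $a$ would cause $a$ to be observed and tracked within $G$'s own analysis, so that the resulting state is reflected in $G$'s summary rather than hidden behind the bit. The remaining steps---reading off the definition of \textit{marksObjDirEsc} and matching it against the definition of a robustness violation---are routine.
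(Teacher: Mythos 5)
Your proposal is correct and follows essentially the same route as the paper's own proof: the \textit{marksObjDirEsp} bit yields an object that becomes escaped and dirty inside $G$, this object is distinct from the objects already escaped and non-clean in $F$ (which, not being parameters, cannot be flushed by $G$), so two escaped and non-clean objects coexist, giving the violation. Your version merely makes explicit the frame argument and the distinctness step that the paper's terse proof leaves implicit.
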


\begin{proof}
If the \textit{marksObjDirEsc} bit is set in $G$ under the calling
context, and none of $G$'s parameters is escaped and dirty, then $G$
must make some object $O$ escaped and dirty, and $O$ is different from
the objects that are already escaped and dirty before calling $G$ in
$F$.  Therefore, calling $G$ causes a robustness violation.
\end{proof}

\begin{theorem}\label{thm:thm2}
Suppose that function $F$ calls $G(x_1, ..., x_n)$ and no objects are
escaped and non-clean in the program state of $F$ right before calling
$G$. Then if calling $G$ causes any robustness violation, the
violation will be detected while analyzing $G$ with the calling
context.
\end{theorem}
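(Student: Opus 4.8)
The plan is to reduce the claim to the soundness of the intraprocedural violation detection (\S\ref{sec:intra-error}) applied to $G$ under its calling context $C$, and to use the precondition to guarantee that nothing relevant to the violation is hidden outside $G$'s scope. First I would enumerate the shapes a runtime violation can take during $G$'s execution: the category-2 violations, namely a store to an escaped PM location (or a release operation such as an unlock / atomic store release) executed at a program point where some \emph{distinct} PM location is simultaneously escaped and non-clean. The category-1 case (an unflushed escaped object at $G$'s exit) is reported directly by $G$'s exit check when that object is neither a parameter nor the return value, and otherwise is carried out in $G$'s summary, so it needs no separate treatment here. I would then invoke the precondition: since no object is escaped and non-clean in $F$ immediately before the call, at the instant $G$ is entered every escaped object is clean, and in particular every escaped parameter $x_i$ is clean.

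The key localization step is to observe that any escaped-and-non-clean object participating in a violation \emph{inside} $G$ must have been driven into that state by an operation executed during $G$'s own run --- a store or atomic load that the transfer functions of Figure~\ref{fig:perstransfer} process. No such object can be inherited from $F$, because the precondition forbids any escaped non-clean object at entry. Hence both objects implicated in a category-2 violation are visible to the analysis of $G$: each is either a parameter of $G$ (whose abstract state is recorded in $C$) or an object first made escaped and/or dirty by a statement lying within $G$.

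The technical core is a simulation argument showing that the abstract state computed while analyzing $G$ under $C$ conservatively over-approximates the concrete runtime state at every program point in $G$, i.e. the analysis never judges a location cleaner or more captured than it may actually be. For this I would establish (i) that $C$ over-approximates the true argument states, because $C$ is obtained from $F$'s already-sound analysis state via $\textit{Abs}$, which collapses a parameter's fields to their lowest persistency value and whose inverse $\textit{AbsRev}$ reintroduces that lowest value on every field, so non-cleanness is never lost; and (ii) that each transfer function is sound --- the may-escape analysis marks escaped every reference that may escape, and the persistency analysis takes the meet (lowest) value at merges. Propagating (i) and (ii) through the dataflow fixpoint yields that whenever two distinct objects are escaped and non-clean at a concrete point, the corresponding abstract values are likewise escaped and non-clean, so the category-2 check fires there during the analysis of $G$.

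Combining these facts gives the conclusion: at the violation point the analysis of $G$ sees the two escaped non-clean objects (or the escaped non-clean object together with the offending store/release to an escaped location) and reports the violation. If $G$ itself issues nested calls, the same reasoning applies to each callee, with Theorem~\ref{thm:thm1} covering the sub-case in which an escaped non-clean object already exists at the nested call and the present theorem covering the complementary sub-case; I would therefore close by induction on call nesting, appealing to the soundness of the computed function summaries. The main obstacle is making (i)--(ii) precise: verifying that the parameter-field abstraction $\textit{Abs}/\textit{AbsRev}$ together with the context-approximation Cases~1--3 never suppresses a dirty field or an escape on which a runtime violation depends. Once the over-approximation is pinned down in the correct lattice direction, the detection of the violation is immediate.
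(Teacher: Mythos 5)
Your proposal is correct and takes essentially the same approach as the paper: use the precondition to pin down $G$'s calling context at entry, then rely on the soundness of analyzing $G$ under that context to conclude the violation is detected. The paper's own proof is just a two-sentence assertion of this fact (it even loosely claims every context element is $\tuple{\captured, \clean}$, though parameters may be escaped-but-clean or captured-but-dirty, a point your version handles more carefully); your localization step, the $\textit{Abs}/\textit{AbsRev}$ over-approximation argument, and the induction on call nesting are precisely the details the paper leaves implicit.
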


\begin{proof}
Since no objects are escaped and dirty in the program state of $F$
before calling $G$, the calling context of $G$ has all of its elements
being $\tuple{\text{captured}, \text{clean}}$.
So if calling $G$ causes any robustness
violation in $F$, the violation will be detected while analyzing $G$
with the calling context.
\end{proof}

\begin{theorem}\label{thm:thm3}
Supposed that function $F$ calls $G(x_1, ..., x_n)$ and some objects are
escaped and non-clean in the program state of $F$ right before calling
$G$.  Suppose some of $x_1, ..., x_n$ is escaped and non-clean.
Then if there is a robustness violation caused by calling $G$ from $F$,
some robustness violation is reported.
\end{theorem}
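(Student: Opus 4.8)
The plan is to start from the definition of a robustness violation in Section~\ref{sec:intra-error}: such a violation consists of a \emph{triggering action}---either a store to an escaped PM location or a release operation (unlock or atomic store release)---occurring while some escaped and non-clean object, the \emph{witness}, is live. Thus a violation caused by calling $G$ is exactly $G$ performing such a triggering action in the presence of a witness. The structural fact I would exploit first is that, because some $x_i$ is escaped and non-clean, the abstraction $\textit{Abs}$ (which keeps the lowest per-field persistency state) forces the calling context $C$ to list $x_i$ as escaped and non-clean, so $\textit{AbsRev}$ begins the analysis of $G$ in a state that already contains an escaped and non-clean object. This is the crucial difference from Theorem~\ref{thm:thm2}, whose context is entirely $\tuple{\captured,\clean}$.

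First I would unfold the violation into its triggering action and a concrete witness $W$, and then case-split on whether $W$ is reachable from $G$'s parameters or allocated inside $G$ (hence tracked by $G$'s dataflow state), or instead originates in $F$ and is invisible to $G$. These two cases are exhaustive, and I would show that each separately forces some violation to be reported.

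For the first case I would appeal to soundness of the dataflow analysis (the correctness argument referenced in Section~\ref{sec:inter}, ``Termination and Correctness''): since $W$ is escaped and non-clean in the concrete execution at the triggering action, $G$'s analysis under $C$ over-approximates it as escaped and non-clean there, so $G$'s intraprocedural detection reports the violation. This mirrors Theorem~\ref{thm:thm2}, except that the witness may now be a parameter rather than a freshly allocated object; the abstraction step above guarantees the parameter remains visible as escaped and non-clean. Here I would emphasize that the statement only requires \emph{some} violation to be reported, not the exact pair $\tuple{O,W}$, so it is harmless if $G$ instead pairs the triggering store with a still-dirty parameter.

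The hard case, and the main obstacle, is when $W$ comes from $F$ and is reachable from no parameter, so $G$ cannot see it; this is precisely the ``violations that involve multiple functions'' handled by the interprocedural detection of Section~\ref{sec:inter-error}. I would split on the triggering action. If it is a store, then $G$ makes some object escaped and non-clean, so the \textit{markObjDirEsc} bit is set in $G$'s summary under $C$; combined with the hypothesis that $F$ has an escaped and non-clean object before the call, the call-site check fires by the same mechanism whose soundness is established in Theorem~\ref{thm:thm1} (completeness only needs the check to fire, so the presence of an escaped and non-clean parameter---excluded in Theorem~\ref{thm:thm1}---does not block it). The delicate subcase is a release in $G$ whose only witness is $W$ in $F$: \textit{markObjDirEsc} records dirtiness, not releases, so I would need the summary to additionally record that $G$ performs a release capable of conflicting with a caller's live escaped and non-clean object, and then fire the same call-site check. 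I expect the part requiring the most care to be verifying that the summary extensions described as ``straightforward and omitted'' in Section~\ref{sec:inter} really do record enough to cover this release subcase.
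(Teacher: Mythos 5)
Your decomposition (a case split on where the concrete witness $W$ lives) differs from the paper's, which instead splits on how many escaped and non-clean objects exist in $F$ before the call: if there is exactly one, it must be a parameter, and the violation is caught intraprocedurally inside $G$ under the calling context; if there are two, say $O_1$ (a parameter) and $O_2$, then either $G$ dirties something other than $O_1$---again caught inside $G$ with $O_1$ as witness---or $G$ stores to $O_1$ itself, in which case the violation between $O_1$ and $O_2$ is not newly caused by the call: both were already \escaped and non-clean before the call, so a violation between that pair already existed (and was reported) in $F$, and the theorem only demands that \emph{some} violation be reported. Your Case 1, and the part of your Case 2 where the trigger store hits some $P \neq x_i$, agree with this in substance, since the dirty parameter forced into the calling context by $\textit{Abs}$ serves as an intraprocedural witness inside $G$.

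The genuine gap is in your invisible-witness case. First, you assert that a store trigger sets the \textit{markObjDirEsc} bit; but by the paper's definition that bit records only dirtiness that is \emph{hidden} from the summary (objects that become clean again before $G$ returns), so if the stored-to object stays dirty at exit the bit need not be set. Second, the call-site check you invoke is exactly Theorem~\ref{thm:thm1}'s, whose stated firing condition requires that \emph{none} of $x_1,\ldots,x_n$ be escaped and non-clean---precisely the situation Theorem~\ref{thm:thm3} excludes---so ``completeness only needs the check to fire'' is an assumption about unspecified tool behavior rather than something you can derive. Third, and most concretely, your argument has no answer when the triggering store is to the dirty parameter $x_i$ itself: nothing new becomes dirty, $G$'s intraprocedural check is silent (a single dirty object stored to on its own cache line is not a violation), and the \textit{markObjDirEsc} route collapses. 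This is exactly the sub-case the paper closes with the observation that the violation between $x_i$ and $W$ predates the call and was therefore already reported in $F$---an idea that never appears in your proof. On the release sub-case you flag: your own hypothesis largely dissolves it, since $x_i$ enters $G$ escaped and non-clean, so any release executed in $G$ before $x_i$ is flushed trips $G$'s intraprocedural release check with $x_i$ as witness, with no summary extension needed (neither your proof nor the paper's handles a release occurring after $G$ flushes $x_i$ while the caller's $W$ remains dirty, so you are not below the paper's bar there).
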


\begin{proof}
Suppose that there is one escaped and dirty object $O$ in the program
state of $F$ before calling $G$.  Then $O$ must be one of $G$'s
parameters.  Therefore, the only case that causes a robustness
violation is where $G$ makes some other object escaped and dirty
before flushing $O$.  This violation can be detected when analyzing
$g$ with its calling context.

Now suppose that there are more than one escaped and dirty objects in
the program state of $F$ before calling $G$.  Without loss of
generality, assume that there are exactly two such objects $O_1$ and $O_2$.
Suppose $O_1$ is passed into $G$ while $O_2$ is not.  If $G$ makes any
object other than $O_1$ escaped and dirty, then this is the same case
as the previous paragraph.  If $G$ stores to $O_1$, there is a
robustness violation between the pair $O_1$ and $O_2$. However, a
robustness violation already exists between the pair before calling
$G$.
\end{proof}

\begin{lemma}\label{lemma:esc}
If a PM location \code{x} is reachable from persistent data structure roots,
then the escape analysis will mark \code{x} and its alias as escaped.
\end{lemma}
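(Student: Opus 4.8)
The plan is to prove soundness of the may-escape analysis by induction, establishing the invariant that at every program point reached by an execution, every reference that concretely points to a PM object reachable from the persistent roots is marked \escaped in \EMap, and that every such reference is contained in the aliases computed by \AliasMap. The three facts I would lean on are: (i) the persistent roots are \escaped in the initial abstract state; (ii) the store transfer function of Figure~\ref{fig:esctransfer} unconditionally marks the stored reference \code{x} and every element of $\AliasMap(\code{x})$ as \escaped, and the load transfer function unconditionally marks the loaded target \code{y} as \escaped; and (iii) the assignment transfer function copies the escape state of the source to the destination. Since the analysis computes a fixed point over all control-flow paths in a monotone framework, it suffices to show the invariant holds along an arbitrary concrete trace; the fixed point then over-approximates every such trace.

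I would structure the induction on the minimal length of a heap path from a persistent root to the object that \code{x} references, using the recursive characterization of reachability: \code{x}'s object is either (a) a root, or (b) the target of a pointer stored in a field of some object $O'$ that is itself reachable by a strictly shorter path. In the base case (a), reachability holds at distance zero and \code{x} is \escaped by fact (i), and the marking is propagated along any assignment chain by fact (iii). In the inductive case (b), the pointer from $O'$ to \code{x}'s object was installed by some store \code{*y = x'} (or \code{*y = \&x'->f}) with \code{x'} referencing \code{x}'s object; fact (ii) then marks \code{x'} and all of its aliases \escaped at that point. Any reference subsequently obtained for the same object---whether by reloading the field via \code{y'' = *y} or by an assignment---is \escaped by the load and assignment rules, so every concrete reference to the reachable object is \escaped, which is exactly the claim.

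The main obstacle is bridging the gap between ``object is reachable'' (a heap property) and ``reference is \escaped'' (a property the analysis tracks only on variables): a reference that witnessed reachability can be overwritten or go out of scope long before \code{x} is re-accessed through a fresh load or a different alias. Closing this gap is precisely why the store and load rules mark \escaped \emph{unconditionally} rather than conditioning on whether the enclosing object is actually reachable---this over-approximation guarantees that no way of acquiring a reference to the object can evade the marking. A second delicate point is the reliance on may-alias soundness of \AliasMap: in case (b) the store explicitly marks only $\AliasMap(\code{x'})$, so I must invoke soundness of the underlying CFL-reachability alias analysis of Zheng and Rugina to conclude that any concrete co-reference of \code{x}'s object is contained in that alias set and is therefore marked \escaped as well.
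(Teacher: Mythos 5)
Your intraprocedural argument is sound in spirit and in fact captures the same key mechanism the paper relies on (unconditional \escaped-marking on stores and loads, propagation on assignments); your induction on heap-path distance from the roots is a cleaner skeleton than the paper's flat enumeration of store/assign/load interleavings. However, there are two genuine gaps. First, and most importantly, you never address the interprocedural analysis, and the paper's proof spends half its effort there. The actual analysis does not apply the transfer functions of Figure~\ref{fig:esctransfer} along a concrete trace through call boundaries; at a call site it applies a \emph{function summary} (with the $\textit{Abs}$/$\textit{AbsRev}$ abstraction plus cached parameter/return may-alias information). So your induction breaks exactly when the witnessing store \code{*y = x'} occurs inside a callee on a parameter, or when aliasing between two parameters is established in one callee and the escape happens in a sibling call --- the paper's interprocedural Cases 1 and 2. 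Your ``arbitrary concrete trace'' framing silently assumes the intraprocedural rules fire everywhere, which is not how the analysis is defined.

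Second, your stated invariant is too strong and, as written, false: you require every concrete reference to a reachable object to be contained in the computed alias sets, but the load rule in Figure~\ref{fig:esctransfer} deliberately resets $\AliasMap(\code{y})$ to $\{\code{y}\}$, so a reference obtained by \code{y = *x} is \emph{never} in the alias set of other references to the same object. The analysis is sound despite this only because loads mark \code{y} \escaped unconditionally --- the paper notes explicitly that ``since \code{y} is escaped, correct alias information is not necessary.'' You half-recognize this, but you cannot both lean on that observation and keep the alias-containment conjunct; drop the conjunct and instead argue that every reference to a still-\captured object is assignment-derived from the allocation (hence in the common alias set), while any load-derived or stored reference is already \escaped. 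Relatedly, the soundness you invoke is not that of the Zheng--Rugina CFL-reachability analysis: that analysis is used only to identify which pointers refer to PM. The $\AliasMap$ in the escape analysis is a separate, intentionally weak structure maintained by the transfer functions themselves, and it is its (limited) properties, not CFL-alias soundness, that your argument must use.
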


\begin{proof}
We will prove the statement for intraprocedural analysis first,
and then prove for the interprocedural analysis.

\textit{Case 1:}
If \code{x} is stored to some data structure via \code{*y = x},
then \code{x} and all elements in its alias set $\AliasMap{(\code{x})}$
are marked as escaped by the second transfer function
in Figure~\ref{fig:esctransfer}.
Furthermore, if \code{x} is later loaded from \code{y}, say \code{a = *y},
then \code{a} is an alias of \code{x}.
Note that although \code{a} is not in the alias set $\AliasMap{(\code{x})}$
of \code{x}, \code{a} is also marked as escaped by
the third transfer function in Figure~\ref{fig:esctransfer}.

\textit{Case 2:}
If \code{x} is added to the alias set $\AliasMap{(\code{a})}$
of some variable \code{a} via \code{a = x}, and
later \code{a} is made escaped via \code{*y = a},
then the second transfer function in Figure~\ref{fig:esctransfer}
marks the entire set $\AliasMap{\code{a}}$ as escaped.

\textit{Case 3:}
If \code{x} has already escaped, and later an alias of \code{x}
is created via \code{y = x}, then the first transfer function
in Figure~\ref{fig:esctransfer} marks \code{y} as escaped.

For interprocedural analysis, we only need to consider two cases.

\textit{Case 1:}
Supposed that \code{x} has already escaped in some function \code{H},
then \code{y} becomes an alias of \code{x} in some function
\code{G(x, y, ...)}, where \code{H} is the caller of \code{G}.
Note that the function cached results also contain the may-alias
information between function parameters and the return value.
So the analysis will use the cached result to mark \code{y} as escaped.

\textit{Case 2:}
Supposed that \code{y} becomes an alias of \code{x} in some function
\code{F(x, y, ...)}, and then \code{x} escapes in some function
\code{G(x, ...)} (or \code{y} escapes in \code{G(y, ...)}),
where \code{F} and \code{G} share the same caller \code{H}.
Right after calling \code{F} in \code{H}, \code{H} has the information
that \code{y} may alias \code{x}.
If \code{x} escapes in \code{G(x, ...)},
then when the analysis uses the cached result of \code{G}
to mark \code{x} as escaped, it also marks the variables that may alias
\code{x} as escaped.
The case is similar when we have \code{y} escape in \code{G(y, ...)}
instead of \code{x} escaping in \code{G(x, ...)}.
\end{proof}

\begin{figure}[!h]
\begin{subfigure}{0.22\textwidth}
{\footnotesize
  \begin{lstlisting}
 void F(...) {
   x = 1;
   y = 1;
 }
\end{lstlisting}
\caption{Subcase A.1}
  }
\end{subfigure}
\begin{subfigure}{0.22\textwidth}
  {\footnotesize
  \begin{lstlisting}
 void F(){
   x = 1;
   G(y, ...);
 }
  \end{lstlisting}
  }
\caption{Subcase A.2}
\end{subfigure}
\begin{subfigure}{0.22\textwidth}
  {\footnotesize
  \begin{lstlisting}
 void F() {
   G(x, ...);
   y = 1;
  }
  \end{lstlisting}
  }
\caption{Subcase A.3}
\end{subfigure}
\begin{subfigure}{0.22\textwidth}
  {\footnotesize
  \begin{lstlisting}
 void F() {
   G$_1$(x, ...);
   G$_2$(y, ...);
 }
  \end{lstlisting}
  }
\caption{Subcase A.4}
\end{subfigure}

\caption{Assume that $\code{x}$ and $\code{y}$ are PM locations that reside on different cache lines and are escaped and clean initially.\label{fig:proof-example}}
\end{figure}

\begin{figure}[!htbp]
\vspace{-.45cm}
\begin{subfigure}{.15\textwidth}
\vspace{.45cm}
\begin{center}
\includegraphics[scale=0.50]{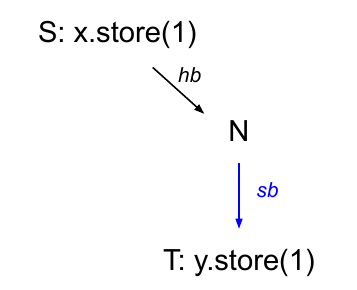}
\end{center}
\vspace{.45cm}
\caption{Subcase B.1}
\end{subfigure}
\begin{subfigure}{.15\textwidth}
\vspace{.45cm}
\begin{center}
\includegraphics[scale=0.50]{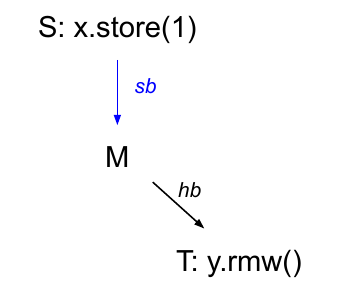}
\end{center}
\vspace{.45cm}
\caption{Subcase B.2}
\end{subfigure}
\begin{subfigure}{.15\textwidth}
\begin{center}
\includegraphics[scale=0.50]{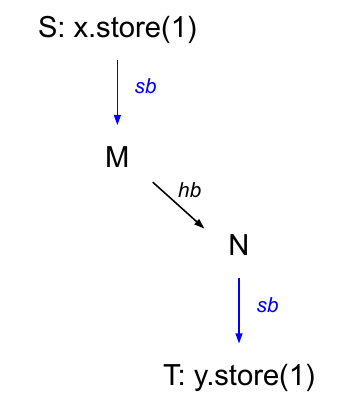}
\end{center}
\caption{Subcase B.3}
\end{subfigure}
\caption{Assume that $\code{x}$ and $\code{y}$ are PM locations that reside on different cache lines and are escaped and clean initially, where \textit{sb} represents the sequenced-before relation, and \textit{hb} represents the happens-before relation.\label{fig:proof-inter}}
\end{figure}

\begin{theorem}\label{thm:report}
If a program has a robustness violation, then it will be detected by \tool.
\end{theorem}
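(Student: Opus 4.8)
The plan is to establish completeness of detection by showing that at every reachable program point the abstract state computed by the analysis \emph{over-approximates} the true escape and persistency information, so that any concrete robustness violation necessarily drives the analysis into a state that triggers a report. First I would recall from Section~\ref{sec:intra-error} and Section~\ref{sec:inter-error} the precise characterization of a robustness violation: at some program point two distinct PM locations are simultaneously \escaped and non-\clean while the program either stores to an \escaped location or performs a release operation (atomic store-release or unlock), together with the function-exit durability case in which an \escaped non-\clean object that is neither a parameter nor the return value survives to an exit. The argument then reduces to two soundness claims---escape soundness and persistency soundness---followed by a case analysis that lifts these claims across call boundaries.

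For escape soundness I would invoke Lemma~\ref{lemma:esc} directly: whenever a concrete location is reachable from a persistent root, it and its aliases are marked \escaped in $\EMap$, so the analysis's \escaped set is a superset of the true one. For persistency soundness I would argue by induction over the dataflow iteration that the transfer functions of Figure~\ref{fig:perstransfer} never report a location as cleaner than it concretely is: a concrete store drives the location to \dirty, only a matching \code{clflush}, or a \code{clwb} followed by a \code{fence}, can restore it to \clean, and because the meet operator selects the lower (dirtier) lattice value, merges at join points keep the abstract state no cleaner than any concrete state on an incoming path, hence no cleaner than the state along the concrete execution itself. Atomic loads are forced to \dirty by construction, which conservatively covers the cross-thread scenario of Figure~\ref{fig:racyreads}. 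Combining the two claims, at any program point reached by a concrete violating execution the abstract state also exhibits two \escaped non-\clean locations (or the offending release / unflushed-exit configuration), so the intraprocedural check of Section~\ref{sec:intra-error} fires.

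The remaining work, and the \emph{main obstacle}, is the interprocedural case, where the two offending locations need not be live in the same function activation and the summary abstraction could in principle hide them. Here I would carry out the case analysis sketched in Figures~\ref{fig:proof-example} and~\ref{fig:proof-inter}, classifying a violation by where each of the two \escaped non-\clean locations is first made \dirty: both inside one function (which reduces to the intraprocedural argument above), one in the caller and one in a callee, one in each of two distinct callees, or entirely inside a single callee. Each configuration is discharged by one of Theorems~\ref{thm:thm1}--\ref{thm:thm3}: Theorem~\ref{thm:thm2} handles the case in which no \escaped non-\clean object exists before the call, so that a self-contained violation inside $G$ is found when $G$ is analyzed under its all-$\tuple{\captured,\clean}$ context; Theorem~\ref{thm:thm1} uses the \textit{markObjDirEsc} bit to catch a violation created when $G$ makes a fresh object \escaped and \dirty while the caller already holds such an object, even if that object is \clean again on return; and Theorem~\ref{thm:thm3} covers the case where one of the offending locations is a parameter already \escaped and non-\clean at the call, showing the violation is either detected inside $G$ or already present before the call.

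The delicate point closing this last part is that the summary mechanism collapses a parameter's $m$ field states into a single abstract state via $\textit{Abs}$ and may approximate a context $C$ from strictly higher contexts $C'$ (Case~2 of the context-sensitivity rules). I would argue that this approximation remains conservative for detection: taking the point-wise meet of summaries for higher contexts can only lower (dirty-en) the abstract states and sets the \textit{markObjDirEsc} bit whenever any contributing summary did, so no violation present under the true context is masked by the merged summary, and monotonicity guarantees the fixed point is reached without ever reporting a location as \clean when a concrete execution leaves it non-\clean. Assembling the intraprocedural report, the three interprocedural theorems applied over the four configurations, and the conservativeness of the summary abstraction then yields that every robustness violation is reported, which is exactly the claim of Theorem~\ref{thm:report}.
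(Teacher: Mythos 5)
Your overall skeleton matches the paper's proof: reduce to two stores $S$ and $T$ on locations \code{x} and \code{y}, use Lemma~\ref{lemma:esc} to get that both are marked \escaped, then do a case analysis over where the two stores sit, discharging the interprocedural configurations with Theorems~\ref{thm:thm1}--\ref{thm:thm3}. Your added persistency-soundness induction and the conservativeness argument for the summary/meet approximation are reasonable supplements to material the paper leaves implicit. However, there is a genuine gap: you identify the interprocedural case as the ``main obstacle,'' but the case your proposal actually fails to handle is the \emph{cross-thread} one, which the paper treats as its own Case~B with a careful enumeration over how $S$ can happen-before $T$ when the two stores are in different threads. You dismiss this with one sentence (``atomic loads are forced to \dirty{} by construction, which conservatively covers the cross-thread scenario''), and your subsequent classification --- both locations dirtied in one function, caller/callee, two callees, single callee --- is purely about call structure within a single thread; it never engages with the thread structure at all, even though you cite Figure~\ref{fig:proof-inter}.

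Why this matters concretely: the atomic-load-to-\dirty{} rule only covers the paper's Subcase~B.1, where the second thread establishes happens-before by performing an atomic load or RMW on \code{x} itself, so that \code{x} becomes \dirty{} in that thread's analysis and the violation is caught intraprocedurally there. It does not cover Subcase~B.2, where the happens-before edge runs through an atomic store/RMW on \code{y} in the \emph{first} thread, nor Subcases~B.3.2/B.3.3, where synchronization is established entirely through atomics or locks on \emph{non-persistent} memory: in those executions no atomic load of \code{x} ever occurs in $T$'s thread, and detection instead rests on the separate rule that a release operation (store-release or unlock) is reported as a violation whenever some \escaped{} non-\clean{} location exists, together with the requirement that a function whose callee performs a release under an all-clean context propagates that obligation to its caller. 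You do list the release check in your characterization of a violation, but you never connect it to the cross-thread argument, i.e., you never show that every happens-before chain from $S$ to $T$ must pass through either an atomic access to PM or a release on non-PM, which is exactly the exhaustiveness argument that makes the paper's Case~B close. Without that enumeration, your proof establishes completeness only for single-threaded violations.
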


\begin{proof}
A robustness violation must involve two stores $S$ and $T$
that write to PM locations \code{x} and \code{y} on
different cache lines such that \code{x} and \code{y}
are reachable from persistent data structure roots,
that $S$ happens before $T$,
and that \code{x} is not flushed before the store to \code{y}.
By Lemma~\ref{lemma:esc}, we can assume that
both \code{x} and \code{y} have been marked as escaped.
We next enumerate the cases and show that our analysis reports some error
or warning. For simplicity, we assume \code{x} and \code{y}
are two different PM objects.
The same proof will apply if they are fields of
the same PM object that reside on different cache lines.

\paragraph{Case A} We first consider the case where the stores $S$
and $T$ are in the same thread.

Figure~\ref{fig:proof-example} presents a few subcases.
Without loss of generality, we assume that functions \code{G, G$_1$, G$_2$}
write to their first parameters without flushing them.
Although there can be chains of function calls in practice, those
cases are not different from the cases presented in
Figure~\ref{fig:proof-example}.
\textit{Subcase A.1} is obvious, as the analysis
detects two escaped and dirty objects at \code{y = 1} and reports an error.
For \textit{Subcase A.2}, the analysis also reports some robustness violation
according to Theorem~\ref{thm:thm3}.
\textit{Subcase A.3} is similar to \textit{Subcase A.1},
as we assume \code{G} writes to its first parameter.
\textit{Subcase A.4} is similar to \textit{Subcase A.2}.

\paragraph{Case B} Now we consider the case where the stores $S$
and $T$ are in two different thread.
Since we assume $S$ happens before $T$, there are three possible subcases
as presented in Figure~\ref{fig:proof-inter}.
The $\stackrel{\textit{hb}}{\rightarrow}$ edges represent the happens-before
relation, and the $\stackrel{\textit{sb}}{\rightarrow}$ edges represent the
sequenced-before relation or the program order.

In \textit{Subcase B.1}, since $N$ is sequenced before $T$,
we can assume that operations $N$ and $T$ are in the same function, say $F$.
Since $N$ happens after $S$, $S$ is an atomic store,
and $N$ is either an atomic load or atomic RMW that reads from \code{x}.
In either case, by the transfer functions in Figure~\ref{fig:perstransfer},
\code{x} is dirty right after the operation $N$.
Thus, there are two escaped and dirty objects after the store $T$,
and the analysis reports a bug while analyzing the function $F$.

In \textit{Subcase B.2}, since $M$ happens before $T$ that performs a store
to \code{y}, $T$ is an atomic RMW operation,
and $M$ can be an atomic store or atomic RMW that stores to \code{y}.
Since $S$ is sequenced before $M$, we can assume that $S$ and $M$
are in the same function, say $F$.
By the same reasoning as \textit{Subcase B.1},
there are two escaped and dirty objects after the operation $M$,
and the analysis reports a bug while analyzing the function $F$.

In \textit{Subcase B.3}, the operations $M$ and $N$ establish a
happens-before relation. Although, the operations $S$ and $T$ are
not necessarily atomics, that does not change the proof.
There are three possibilities.
Subcase B.3.1) If $M$ and $N$ are atomic operations that
read from or write to persistent memory locations, then it is similar
to \textit{Subcase B.1} and \textit{Subcase B.2}.
Subcase B.3.2) If $M$ and $N$ are atomic operations on non-persistent memory locations, then $M$ must be a release operation and we require that (a) all cache lines must be clean before doing any release operation and
(b) any function call in which all parameters are clean in the calling context
that contains a release operation marks itself as having done a
release operation, and thus all cache lines must be clean
before making such a function call.
Subcase B.3.3) If $M$ and $N$ are locking operations where $M$ is an release and $N$ is an acquire, it is the same as Subcase B.3.2.

\paragraph{Other Cases}
We will discuss some other cases in this paragraph.
Note that the above proof also applies if \code{x} and \code{y}
are array elements. If \code{x} is some object reachable from
some function parameter, recall
our strategy where we create a new label when \code{x} is dereferenced.
The new label is marked as escaped by the transfer functions in
Figure~\ref{fig:esctransfer}.  Thus, the proof above also applies.
Lastly, if \code{x} is an address computed by pointer arithmetic,
then we will throw a warning right after the store to \code{x}.
\end{proof}

\begin{theorem}
If \tool does not report a robustness violation, then any execution of the program is one under strict persistency.
\end{theorem}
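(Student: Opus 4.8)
The plan is to reduce this statement, which asserts the semantic correctness of the analysis, to the combinatorial completeness already established in Theorem~\ref{thm:report}. First I would take the contrapositive of Theorem~\ref{thm:report}: since every robustness violation is reported, the hypothesis that \tool reports no violation implies that the program contains no robustness violation in the combinatorial sense defined at the start of that proof --- there is no pair of stores $S$ and $T$ to escaped PM locations \code{x} and \code{y} on distinct cache lines with $S \xrightarrow{\textit{hb}} T$ such that \code{x} is left unflushed before the store to \code{y}. The remaining work is to show that the absence of such pairs is exactly what guarantees that every weak-persistency execution is equivalent to a strict-persistency one.

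To do this I would first make the semantic target precise. Strict persistency requires that the persistency order of stores respect happens-before, equivalently that for any crash the set of persisted stores be downward closed under $\xrightarrow{\textit{hb}}$ when restricted to the stores whose effects are observable after a crash. The two quantities that make a store observable are exactly the ones tracked by our two finite state machines: a store is post-crash observable only if its location has \escaped (is reachable from a persistent root, by Lemma~\ref{lemma:esc}) and it has actually been persisted. Thus a weak execution can fail to be strict-persistency-equivalent only if there are two observable stores $S \xrightarrow{\textit{hb}} T$ for which a crash can persist $T$ while leaving $S$ unpersisted.

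The core step is then to argue that any such out-of-order persistence corresponds to a combinatorial robustness violation, so that its absence (already guaranteed) forbids the bad execution. Here I would reuse the case analysis of Theorem~\ref{thm:report} to discharge the situations in which ordering is enforced without an explicit flush and hence no violation is needed: stores to the same cache line are ordered by coherence; stores to \captured locations are not post-crash observable and so cannot participate; and the inter-thread cases require a happens-before edge that is necessarily carried by an atomic access or a release/acquire pair, which the transfer functions of Figure~\ref{fig:perstransfer} and the release-operation checks convert into a \dirty state that must be cleaned before the second store. In every remaining configuration the only way to order $S$ before $T$ is an explicit flush of \code{x} before the store to \code{y}; the hypothesis supplies exactly this flush, so $S$ is persisted whenever $T$ is, the persisted set is downward closed, and the execution coincides with a strict-persistency execution.

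The main obstacle I expect is the semantic half of the argument rather than the combinatorial bookkeeping: one must show rigorously that an unflushed, coherence-unordered store $S$ can genuinely be delayed past $T$ by the hardware persistency model (so that the absence of a flush truly permits a non-strict execution), and, in the converse direction, that \captured stores never leak into the post-crash observable set even in the presence of aliasing and subsequent escapes. The latter depends critically on Lemma~\ref{lemma:esc} tracking escape soundly through aliases and interprocedural summaries, and on the data-race-freedom assumption for non-atomic accesses, which is what lets the release-before-unlock discipline substitute for a per-load flush. I would therefore devote the bulk of the proof to these two semantic claims and treat the enumeration of synchronization cases as a direct appeal to the structure already laid out in the proof of Theorem~\ref{thm:report}.
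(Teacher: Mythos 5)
Your proposal is correct in outline, but it takes a genuinely different route from the paper. The paper's own proof never invokes Theorem~\ref{thm:report}: it argues directly from the dataflow invariant that, absent any report, at most one escaped and non-clean object exists at each program point (discounting dirtiness caused solely by atomic loads), and then concludes that in any execution such an object is flushed clean before another appears, which matches the definition of strict persistency. You instead contrapose the completeness theorem: no report implies no pair of stores $S$, $T$ with $S$ happens-before $T$, both to escaped locations on distinct cache lines, with the first left unflushed; you then carry out the semantic argument that the absence of this pattern makes the persisted store set downward closed under happens-before when restricted to post-crash-observable stores. Your decomposition buys rigor exactly where the paper is thinnest: the paper's two-sentence proof silently crosses from analysis states to runtime states and never formalizes ``is one under strict persistency,'' whereas you make the observability and downward-closure content explicit and correctly pin the real obligations---that captured stores remain post-crash unobservable (soundness of Lemma~\ref{lemma:esc}) and that inter-thread happens-before edges are carried by synchronization that the transfer functions and the release-operation check account for. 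What it costs is that the step ``no combinatorial pattern implies no semantic violation'' is precisely the contrapositive of the unproven opening assertion of Theorem~\ref{thm:report}'s proof, so your reduction does not inherit it for free; you still owe that proof, as you acknowledge. Two smaller observations: your first ``obstacle'' (that hardware can genuinely delay an unflushed store past $T$) argues the necessity of flushes and is not needed for this sufficiency direction at all; and your execution-level framing neatly sidesteps the paper's carve-out for atomic-load-induced dirtiness, since that dirtiness is an analysis artifact corresponding to no actual unpersisted local store.
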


\begin{proof}If \tool does not report a robustness violation, there is at most one escaped and non-clean object at each program point of the data flow analysis, whose transition to non-clean state is not solely due to atomic loads. Then for any execution of the program, such an object is always flushed to the clean state before another one appears, which matches the definition of strict persistency that “persistency memory order is identical to volatile memory order.”
\end{proof}

\begin{theorem}\label{thm:noreport}
After \tool transforms a program by inserting flushes and fences, \tool will not report any robustness violations in the program. 
\end{theorem}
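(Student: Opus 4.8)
The plan is to show that the insertion procedure drives the program to a fixpoint of the analysis: once flushes and fences have been placed, a fresh run of the detector reports nothing. I would argue this by case analysis over the violation categories of Section~\ref{sec:intra-error} (and their interprocedural lifts), showing that each detected violation is locally removed by the corresponding insertion, and then showing that the insertions move persistency states only toward \clean so that they cannot manufacture a new violation. The high-level reason this works is the one already sketched in Section~\ref{sec:trans}: the inserted \code{clwb} turns a \dirty location into \clwb, and the fence placed before the violating point turns that \clwb location \clean, so the offending location is \clean by the time the previously reported violation would fire.

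Concretely, I would treat the two main categories separately. For the function-exit category (case~\ref{error1}), flush insertion first rewrites every offending store so its location is \clwb rather than \dirty, and the exit fence then maps every \clwb location to \clean by the fence transfer function of Figure~\ref{fig:perstransfer}; hence no escaped non-clean location survives to the exit and nothing is reported there. For the two-escaped-and-non-clean category (case~\ref{error2}), the first object has been turned \clwb by flush insertion, and the fence placed immediately before the second object becomes escaped and non-clean maps that first object to \clean; thus at every program point at most one escaped location is non-clean, and the store that formerly triggered the violation no longer does. The array category is handled identically on the map $\PMap{a}$, since its transfer functions are obtained from Figure~\ref{fig:perstransfer} by substitution.

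For the ``no new violations'' direction I would observe that both inserted operations only raise persistency states toward \clean, the top of the lattice: \code{clwb} sends \dirty to \clwb and a fence sends \clwb to \clean, while neither lowers any state. Because every detection condition requires some escaped location to be non-clean, replacing a state by a higher one can only remove, never create, the precondition of a violation; monotonicity of the transfer functions then propagates this fact through the whole dataflow solution. Combined with the first direction, every previously reported violation is eliminated and no new one appears, which is exactly the claim; Theorem~\ref{thm:report} guarantees that there is nothing left for a re-run to find.

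The main obstacle I expect is the interprocedural and atomic-load bookkeeping. Inserting flushes and fences changes each callee's summarized result and, in particular, can clear its \textit{markObjDirEsc} bit, so I must argue that the interprocedural worklist re-converges to a violation-free fixpoint rather than merely checking one function in isolation; this is where I would lean on the monotonicity of the summary meet to ensure the re-analysis terminates at a cleaner solution. The second delicate point is the deferral of atomic-load fences: an atomic load sets a location \dirty but the corresponding fence is postponed to the next store to PM, so I must confirm that along every path the deferred fence returns the \clwb location to \clean before any later escaped store with which it could pair, and that the flush-insertion pass has already fired for that location on every such path before the fence point. Verifying this path-wise coverage, rather than merely point-wise state membership, is the step I expect to require the most care.
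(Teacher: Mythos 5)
Your proposal is correct, but it takes a genuinely different route from the paper's proof. The paper proves Theorem~\ref{thm:noreport} by reusing the case enumeration of Theorem~\ref{thm:report}: after the same WLOG step you make (flush insertion lets one assume every store is immediately followed by \code{clwb}), it walks through the semantic scenarios in which a violation can arise --- same-thread subcases A.1--A.4 distinguished by function-call structure, and cross-thread subcases B.1--B.3 distinguished by how happens-before is established (atomic load/RMW on \code{x}, atomic store/RMW on \code{y}, or a release/acquire pair on non-PM locations) --- and checks in each scenario that the inserted fence lands between the \code{clwb} of \code{x} and the store to \code{y}, so \code{x} is \clean when \code{y} is written. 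You instead decompose by the analysis's own detection categories from Section~\ref{sec:intra-error} (function-exit, two-escaped-non-clean, arrays) and add an explicit second direction the paper leaves implicit: since \code{clwb} and fence transfer functions only raise persistency states in the lattice of Figure~\ref{fig:perstransfer} and never touch escape states, monotonicity implies the transformed program's fixpoint is pointwise no lower than the original's, so the insertions cannot manufacture reports at points that were previously clean. Each approach buys something: the paper's gets exhaustive coverage of the cross-thread patterns for free from Theorem~\ref{thm:report}, which is exactly where the delicate fence placements live (subcases B.1, B.2, and the release case B.3.2); yours is more faithful to what the theorem literally asserts --- a statement about what the static analysis \emph{reports} on the transformed program, not about semantic robustness --- and your monotonicity lemma answers a question the paper never explicitly addresses, namely why re-running the analysis cannot surface fresh violations elsewhere.

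The two obstacles you flag at the end are real, and it is to your credit that you name them, since the paper's proof handles both only tacitly. For atomic-load deferral, the resolution is that a deferred fence is still placed before every point at which a report can fire --- the next PM store (fence-insertion rule 2), any release operation (the analysis reports at releases when an escaped non-clean location exists, so a fence is inserted there), and function exits (fence-insertion rule 1) --- so the \clwb window opened by an atomic load never overlaps a detection point; this is precisely what the paper asserts without argument in its treatment of subcases B.1 and B.3.2. For the interprocedural side, your appeal to monotonicity of the summary meet is the right ingredient, and it mirrors the paper's reliance on Theorems~\ref{thm:thm1}--\ref{thm:thm3} rather than re-deriving summary convergence. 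Completing these two steps would make your proof fully rigorous, and arguably tighter than the published one.
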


\begin{proof}
Since in the transformation, \tool inserts \code{clwb} after stores whenever it detects a robustness violation, we can assume without loss of generality that all stores are immediately flushed with \code{clwb} instructions. 
Then we need to prove that the inserted fences effectively eliminate all robustness violations. We will proceed by analyzing the cases discussed in Theorem~\ref{thm:report}. 

\paragraph{Case A} We first consider the case where the stores $S$
and $T$ are in the same thread.
For \textit{Subcase A.1} and \textit{Subcase A.2}, fences will be inserted after \code{x = 1}.
For \textit{Subcase A.3} and \textit{Subcase A.4}, fences will be inserted at the exits of functions \code{G} and \code{G$_1$}.
In any case, \code{x} becomes clean before the store to \code{y}. So there is no robustness violation.

\paragraph{Case B} Now we consider the case where the stores $S$
and $T$ are in two different thread.

In \textit{Subcase B.1}, $N$ and $T$ are in the same thread, and $N$ is either an atomic load or atomic RMW that reads from \code{x}. In either case, fences will be inserted 
after $N$ and before $T$, transferring \code{x} to the clean state before the store to \code{y}.

In \textit{Subcase B.2}, $S$ and $M$ are in the same thread, and $M$ is either an atomic store or atomic RMW that stores to \code{y}. Fences will be inserted after $S$ and before $M$, transferring \code{x} to the clean state before the store to \code{y}.

In \textit{Subcase B.3}, there are three possibilities. Subcase B.3.1 is similar to \textit{Subcase B.1} and \textit{Subcase B.2}, so we omit the discussions here. Subcase B.3.2) $M$ is a release operation. Fences will be inserted before $M$ since \code{x} is in the state of \code{clwb}. Thus, the robustness violation is eliminated. Subcase B.3.3 is similar to Subcase B.3.2. 

\textit{Other Cases}. If \code{x} and \code{y} are array elements, fences will be inserted after the store to \code{x} as well as at function exits to eliminate robustness violations. The case is similar if \code{x} is some object reachable from function parameters.
If \code{x} is an address computed by pointer arithmetic, then fences will be conservatively inserted after the stores to \code{x}. 
\end{proof}

\end{document}